\newcommand{\C}{\mathbb{C}}
\theoremstyle{plain}
\newtheorem{theorem}{Theorem}[section]
\newtheorem{proposition}[theorem]{Proposition}
\newtheorem{corollary}[theorem]{Corollary}
\newtheorem{lemma}[theorem]{Lemma}
\newtheorem{principle}[theorem]{Principle}
\theoremstyle{definition}
\newtheorem{definition}[theorem]{Definition}
\newtheorem{example}[theorem]{Example}
\theoremstyle{remark}
\newtheorem{remark}[theorem]{Remark}
\newcommand{\bW}{\mathbb{W}}
\newcommand{\bK}{\mathbb{K}}
\newcommand{\cS}{\mathcal{S}}
\newcommand{\ta}{\theta}
\newcommand{\res}{\mathop{\rm res}}
\DeclareFontFamily{U}{rcjhbltx}{}
\DeclareFontShape{U}{rcjhbltx}{m}{n}{<->rcjhbltx}{}
\DeclareSymbolFont{hebrewletters}{U}{rcjhbltx}{m}{n}
\DeclareMathSymbol{\shin}{\mathord}{hebrewletters}{152}
\newcommand{\restr}[2]{\mathop{\big\lfloor_{{#1}\to {#2}}}}
\newcommand{\set}[1]{\llbracket {#1} \rrbracket}
\title{Symplectic duality via log topological recursion}
\author[A.~Alexandrov]{A.~Alexandrov}
\address{A.~A.: Center for Geometry and Physics, Institute for Basic Science (IBS), Pohang 37673, Korea
}
\email{alex@ibs.re.kr}
\author[B.~Bychkov]{B.~Bychkov}
\address{B.~B.: Department of Mathematics, University of Haifa, Mount Carmel, 3498838, Haifa, Israel}
\email{bbychkov@hse.ru}
\author[P.~Dunin-Barkowski]{P.~Dunin-Barkowski}
\address{P.~D.-B.: Faculty of Mathematics, National Research University Higher School of Economics, Usacheva 6, 119048 Moscow, Russia; HSE--Skoltech International Laboratory of Representation Theory and Mathematical Physics, Skoltech, Bolshoy Boulevard 30 bld. 1, 121205 Moscow, Russia; and NRC “Kurchatov Institute” -- ITEP, 117218 Moscow, Russia}
\email{ptdunin@hse.ru}
\author[M.~Kazarian]{M.~Kazarian}
\address{M.~K.: Faculty of Mathematics, National Research University Higher School of Economics, Usacheva 6, 119048 Moscow, Russia; and Igor Krichever Center for Advanced Studies, Skoltech, Bolshoy Boulevard 30 bld. 1, 121205 Moscow, Russia}
\email{kazarian@mccme.ru}
\author[S.~Shadrin]{S.~Shadrin}
\address{S.~S.: Korteweg-de Vries Institute for Mathematics, University of Amsterdam, Postbus 94248, 1090GE Amsterdam, The Netherlands}
\email{S.Shadrin@uva.nl}	
\begin{document}
	
\begin{abstract}
We review the notion of symplectic duality earlier introduced in the context of topological recursion. We show that the transformation of symplectic duality can be expressed as a composition of $x-y$ dualities in a broader context of log topological recursion. As a corollary, we establish nice properties of symplectic duality: various convenient explicit formulas, invertibility, group property, compatibility with topological recursion and KP integrability. As an application of these properties, we get a new and uniform proof of topological recursion for large families of weighted double Hurwitz numbers; this encompasses and significantly extends all previously known results on this matter.
\end{abstract}
	
\maketitle
	
\setcounter{tocdepth}{3}
\tableofcontents

\section{Introduction}

Topological recursion (TR) is a procedure that associates to a so-called spectral curve data that consists of a Riemann surface $\Sigma$ and a pair of meromorphic functions $(x,y)$ a system of symmetric meromorphic differentials (called the \emph{correlation differentials} or sometimes, slightly abusing the terminology, the \emph{$n$-point functions})~\cite{EO}. It was originally developed as a computational tool for the cumulants of matrix models, but since then it evolved to a new and active area of research by itself that provides a universal interface that relates a huge class of algebraic and combinatorial enumerative problems to the intersection theory of moduli spaces and integrable systems.

In the present paper (for large families of spectral curves) we prove a closed formula which represents the correlation differentials for the spectral curve with $x$ shifted by a function of $y$ in terms of the correlation differentials for the original spectral curve. We show that this transformation preserves the KP integrability property, and prove certain other properties of such transformations. We apply our main result to uniformly prove topological recursion (or \emph{logarithmic} topological recursion, see below) for large families of weighted double Hurwitz numbers which encompass almost all (single or double) Hurwitz-type numbers discussed in the literature (see Section~\ref{sec:HurwSummary} for more details). We also provide several further applications, see below.

\subsection{Perspective and results} 

In the last few years the structural development of the theory of topological recursion turned it into a powerful computational tool that not just extends the variety of problems governed by TR but also provides explicit closed formulas for the $n$-point generating functions (as opposed to recursive computation that TR is meant to give) and closed algebraic relations among them. Among the most important developments are the following ones:
\begin{description}
	\item[$x-y$ duality] The $x-y$ duality traces what happens with the correlation differentials under the swap of $x$ and $y$. An explicit closed formula was conjectured in~\cite{borot2023functional}, and it was soon proved in~\cite{hock2022xy} in a special case, substantially simplified in~\cite{hock2022simple}, and proved in general case in~\cite{ABDKS1}. This formula gained further reformulations and applications (in particular, to KP integrability) in~\cite{hock2023laplace,ABDKS3}.
 	\item[symplectic duality] Symplectic duality was introduced in~\cite{BDKS4} as a generalization of the $x-y$ duality that depends on an additional rational function. It was used mostly in the context of vacuum expectation values to prove topological recursion for weighted double Hurwitz numbers / weighted maps and constellations with internal faces (see also~\cite{bonzom2022topological}, though the latter reference does not cover the most general cases like the $r$-spin Hurwitz numbers), and for the so-called generalized fully simple maps in~\cite{ABDKS2}.
	\item[LogTR] It was first observed in~\cite{hock2023xy} and further structurally developed in~\cite{ABDKS4} that one can extend the realm of applications of $x-y$ duality in topological recursion beyond the case of meromorphic $x$ and $y$. It is sufficient to assume that $dx$ and $dy$ are meromorphic, thus $x$ and $y$ are allowed to have logarithmic singularities. In this case the formulation of topological recursion needs an adjustment which we call the log topological recursion (LogTR). 
\end{description}

The main goal of this paper is to show that more general symplectic duality can, in fact, be represented as a sequence of $x-y$ dualities, with eventual corrections coming from the necessity to use the log topological recursion. This unifies the existing approaches to explicit closed formulas for the $n$-point differentials, identifies competing closed formulas provided in the literature (which is especially remarkable since these formulas naturally occur in quite different contexts), upgrades symplectic duality to a group action on the spectral curve data and associated differentials (and brings it far beyond the original setup of vacuum expectation values where it was first developed), and shows that the symplectic duality preserves the KP integrability. The results of the present paper also serve as another argument in favor of the claim that LogTR as defined in~\cite{ABDKS4} is indeed \emph{the} natural extension of TR to spectral curves with logarithmic singularities. 

As an application of our results on symplectic duality we get a new and uniform proof of (log) topological recursion for large families of weighted double Hurwitz numbers. This encompasses all previously known results on this matter (including the families studied in \cite{BDKS2}; also see \cite[Section~5.2]{BDKS2} in particular for the special cases of those families studied in the prior literature), but also quite significantly extends beyond that, as we can, in particular, now remove various technical restrictions imposed before. This proof is also quite a bit nicer than everything available before, since even in \cite{BDKS2} one had to carefully analyze various possible poles (separately for different families) to prove the projection property, while now we give a much simpler and more general  and more conceptual proof.

We also give a new way how to identify various formulas for Kontsevich--Witten $n$-point functions, and we give a new proof of a recent theorem of Bouchard, Kramer, and Weller~\cite{BKW} that transalgebraic topological recursion reproduces in the limit the so-called Atlantes Hurwitz numbers (which do not satisfy the original TR by themselves), introduced in~\cite{ALS}.

\subsection{Organization of the paper} In Section~\ref{sec:TopologicalRecursion-Ord-n-Log} we remind the definitions of the ordinary and logarithmic topological recursion. In Section~\ref{sec:Review-x-y-duality} we provide a review of the main results on the $x-y$ duality. In Section~\ref{sec:SymplecticDuality} we recall the notion of symplectic duality. In Section~\ref{sec:SymplDual-via-xy} we state and prove the main result (Theorem \ref{thm:sympl-from-xy}) of this paper that reduces symplectic duality to $x-y$ duality.  In Section~\ref{sec:CompatibilityTR} we append the results of the previous section with an analysis of the (logarithmic) projection property in order to establish the connection to (log) topological recursion; as an application we get a new proof of (log) topological recursion for various general families of weighted double Hurwitz numbers. Finally, in Section~\ref{sec:MoreExamples} we collect some further examples.

\subsection{Notation} Throughout the text we use the following notation:
\begin{itemize}
	\item $\set{n}$ denotes $\{1,\dots,n\}$.
	\item $z_I$ denotes $\{z_i\}_{i\in I}$ for $I\subseteq \set{n}$.
	\item $[u^d]$ denotes the operator that extracts the corresponding coefficient from the whole expression to the right of it, that is, $[u^d]\sum_{i=-\infty}^\infty a_iu^i \coloneqq a_d$.
	\item
	$\restr{u}{v}$ denotes the operator of substitution (or restriction), that is,\\ $\restr{u}{v} f(u) \coloneqq f(v)$.
	\item $\cS(u)$ denotes $u^{-1}(e^{u/2} - e^{-u/2})$.
	\item $d_i$ stands for taking the differential with respect to the $i$-th variable.

\end{itemize}

\subsection{Acknowledgments}
A.~A. was supported by the Institute for Basic Science (IBS-R003-D1).
B.B. was supported by the ISF Grant 876/20.  B.B., P.D.-B., and M.K. were supported by the Russian Science Foundation (grant No. 24-11-00366).
S.~S. was supported by the Netherlands Organization for Scientific Research. A.A. is grateful for the hospitality to the
Skoltech, where part of this research was carried out.

We thank V.~Bouchard, A.~Hock, R.~Kramer, and Q.~Weller for useful discussions. We are also grateful to the anonymous referees for their helpful remarks and suggestions.

\section{Topological recursion and logarithmic topological recursion}
\label{sec:TopologicalRecursion-Ord-n-Log}

\subsection{Topological recursion}
Let us recall the definition of topological recursion (TR) in terms of the loop equations and projection property. We follow the exposition in~\cite[Section 5]{ABDKS1}. 

Let $x$ and $y$ be meromorphic functions on a compact Riemann surface $\Sigma$, and $\{\omega^{(g)}_{m}\}$, $g\geq 0$, $m\geq 1$, $2g-2+m>0$ be a given system of symmetric meromorphic differentials ($\omega^{(g)}_{m}$ is defined on $\Sigma^m$). We also define $\omega^{(0)}_1 \coloneqq ydx$ and $\omega^{(0)}_2\coloneqq B$, where $B$ is the so-called Bergman kernel, that is, the unique symmetric meromorphic bi-differential on $\Sigma^2$ with a double pole along the diagonal with bi-residue $1$ and no further singularities, whose $\mathfrak{A}$-periods (for some choice of $\mathfrak{A}$-cycles) are all equal to $0$.

\begin{remark}
	Note that we use the definition $\omega^{(0)}_1 \coloneqq ydx$ in the present paper, 
	as opposed to %
	a different convention $\omega^{(0)}_1 \coloneqq -ydx$ used, e.g., in \cite{ABDKS1,ABDKS4}. 
	
	Both conventions are widely used in the literature.
\end{remark}

Assume that all zeros of $dx$ are simple and $y$ is regular at the zeros of $dx$ and the zeros of $dx$ and $dy$ are disjoint. Denote the zeros of $dx$ by $p_1,\dots,p_N\in \Sigma$ and let $\sigma_i$ be the deck transformations of $x$ near $p_i$.

\begin{definition} We say that the system of symmetric meromorphic differentials $\{\omega^{(g)}_m\}$, $g\geq 0$, $m\geq 1$ satisfies
	\begin{itemize}
		\item the \emph{linear loop equations}, if for any $g,m\geq 0$, $i=1,\dots,N$
		\begin{equation} \label{eq:LLE-original}
			\omega^{(g)}_{m+1}(z_{\llbracket m \rrbracket},z) + \omega^{(g)}_{m+1}(z_{\llbracket m \rrbracket},\sigma_i(z))
		\end{equation}
		is holomorphic at $z\to p_i$ and has at least a simple zero in $z$ at $z=p_i$;
		\item the \emph{quadratic loop equations}, if for any $g,m\geq 0$, $i=1,\dots,N$, the quadratic differential in $z$
		\begin{equation} \label{eq:QLE-original}
			\omega^{(g-1)}_{m+2}(z_{\llbracket m \rrbracket},z,\sigma_i(z)) + \sum_{\substack{g_1+g_2=g\\ I_1\sqcup I_2 = \llbracket m \rrbracket
			}} \omega^{(g_1)}_{|I_1|+1} (z_{I_1},z)\omega^{(g_2)}_{|I_2|+1} (z_{I_2},\sigma_i(z))
		\end{equation}
		is holomorphic at $z\to p_i$ and has at least a double zero at $z=p_i$.
	\end{itemize}
\end{definition}

\begin{remark}\label{rem:QLEDelta}
	Note that if it is already known that a given system of symmetric meromorphic differentials $\{\omega^{(g)}_m\}$, $g\geq 0$, $m\geq 1$ satisfies the linear loop equations, then the quadratic loop equations are equivalent to the following statement (see~\cite{DKPS-rspin}): for any $g,m\geq 0$, $i=1,\dots,N$, the expression
	\begin{equation} \label{eq:QLE-newform}
		\restr{w}{z}\Delta_{i,z}\Delta_{i,w}\left(\omega^{(g-1)}_{m+2}(z_{\llbracket m \rrbracket},z,w) + \sum_{\substack{g_1+g_2=g\\ I_1\sqcup I_2 = \llbracket m \rrbracket
		}} \omega^{(g_1)}_{|I_1|+1} (z_{I_1},z)\omega^{(g_2)}_{|I_2|+1} (z_{I_2},w)\right)
	\end{equation}
	is holomorphic at $z\to p_i$ and has at least a double zero at $z=p_i$. Here $\Delta_{i,z} f(z) := f(z)-f(\sigma_i(z))$.
\end{remark}

\begin{definition} We say that $\{\omega^{(g)}_{m}\}$ satisfies the \emph{blobbed topological recursion} \cite{BS-blobbed}, if it satisfies the linear and quadratic loop equations.
\end{definition}

Under the assumption that all zeros $\{p_1,\dots,p_N\}$ of $dx$ are simple, the statement that $\{\omega^{(g)}_{m}\}$ satisfies the blobbed topological recursion implies that the principal parts near the zero locus of $dx$ of each $\omega^{(g)}_{m}$, $2g-2+m>0$, in each variable are uniquely determined by $\omega^{(g')}_{m'}$ with $2g'-2+m' < 2g-2+m$.

\begin{definition} \label{def:PP}
	We say that $\{\omega^{(g)}_{m}\}$ satisfy the original \emph{topological recursion} (TR) on the spectral curve $(\Sigma,x,y)$, if in addition to the blobbed topological recursion they satisfy the so-called \emph{projection property}: for any $g\geq 0$, $m\geq 1$, $2g-2+m>0$
	\begin{equation}
		\omega^{(g)}_{m}(z_{\set{m}}) = \sum_{i_1,\dots,i_m=1}^N \Bigg(\prod_{j=1}^m \res_{z_j'\to p_{i_j}} \int^{z_j'}_{p_{i_j}} B(\cdot,z_j)\Bigg) \omega^{(g)}_{m}(z'_{\set{m}}).
	\end{equation}
\end{definition}

In other words, we say that in the case of original topological recursion the meromorphic differentials $\omega^{(g)}_{m}$, $2g-2+n>0$, have no other poles than at $\{p_1,\dots,p_N\}$, and all their $\mathfrak{A}$-periods are equal to zero.

\subsection{Logarithmic topological recursion}\label{sec:LogTRDef}

\subsubsection{Motivation of log topological recursion}
Topological recursion relations depend on the differentials $dx$, $dy$ only, but not the functions $x$ and $y$ themselves, and they are well defined if $dx$ and $dy$ are global meromorphic differentials with possibly nonzero residues. In the latter case the functions $x$ and $y$ are not univalued and posses logarithmic singularities. In the presence of this sort of singularities it is useful to consider, along with the original topological recursion, its variation that we call \emph{logarithmic topological recursion} (we also call it log topological recursion or just LogTR), introduced in~\cite{ABDKS4} and motivated by the results of~\cite{hock2023xy}. It turns out that LogTR (which coincides with the original TR for meromorphic $x$ and $y$ but differs if $y$ has logarithmic singularities of certain type described below) is the proper version of topological recursion if one wants to consider the $x-y$ duality or the symplectic duality which are discussed below. That is, as proved in~\cite{ABDKS4}, the $x-y$ duality formula holds for LogTR in the presence of logarithmic singularities in the same form as for meromorphic $x$ and $y$. In the present paper we prove a similar statement regarding the more general \emph{symplectic duality}. Since these statements have numerous implications for many enumerative problems, as discussed in~\cite{hock2023xy,ABDKS4} and the present paper respectively, one might argue that LogTR is \emph{the} correct definition in the presence of logarithmic singularities.

\subsubsection{Definition of log topological recursion}
Assume that $dx$ and $dy$ are globally defined meromorphic $1$-forms on a compact Riemann surface $\Sigma$ with possibly non-vanishing residues at some points. We still assume that all zeros $\{p_1,\dots,p_N\}$ of $dx$ are simple, $dy$ is regular at the zero locus of $dx$ and the zero loci of $dx$ and $dy$ are disjoint.

\begin{definition}\label{def:necTR}
	We call the above conditions on $(\Sigma,x,y)$ the \emph{necessary TR conditions}.
\end{definition}

\begin{definition}
	We say that the primitive $y$ of the differential $dy$ posses \emph{logarithmic singularity} at some point~$a$ on~$\Sigma$ if $dy$ has a pole at~$a$ with nonzero residue. A logarithmic singularity of~$y$ is called \emph{LogTR-vital} if this pole of $dy$ is simple and $dx$ has no pole at this point.
\end{definition}

Let $a_1,\dots,a_M$ be the LogTR-vital singular points of $y$. We denote the residues of~$dy$ at these points by $\alpha_1^{-1},\dots,\alpha_M^{-1}$, respectively. That is, the principal part of $dy$ near $a_i$ is given by $\alpha_i^{-1} dz / (z-a_i)$ in any local coordinate~$z$.

Consider the germs of the principal parts of meromorphic $1$-forms in the neighborhoods of $a_i$, $i=1,\dots,M,$ defined as the principal parts of coefficients of positive powers of $\hbar$ in the following expressions:
\begin{align}\label{eq:PrincipalPartsLog}
	\left( \frac{1}{\alpha_i\cS(\alpha_i\hbar \partial_{x})}\log(z-a_i)\right)dx.
\end{align}

\begin{definition}\label{def:log-proj} Let $\{{}^{\log{}}\omega^{(g)}_{m}\}$, $g\geq 0$, $m\geq 1$, $2g-2+m>0$ be a given system of symmetric meromorphic differentials, and set ${}^{\log{}}\omega^{(0)}_1 = ydx$, ${}^{\log{}}\omega^{(0)}_2 = B$. We say that $\{{}^{\log{}}\omega^{(g)}_{m}\}$ satisfies the \emph{logarithmic topological recursion} (LogTR) on the spectral curve $(\Sigma,x,y)$, if  $\{{}^{\log{}}\omega^{(g)}_{m}\}$ satisfy the blobbed topological recursion and additionally they satisfy the so-called \emph{logarithmic projection property}: for any $g\geq 0$, $m\geq 1$, $2g-2+m>0$
	\begin{align} \label{eq:LogProjection}
		{}^{\log{}}\omega^{(g)}_{m}(z_{\set{m}}) & = \sum_{i_1,\dots,i_m=1}^N \Bigg(\prod_{j=1}^m \res_{z_j'= p_{i_j}} \int^{z_j'}_{p_{i_j}} B(\cdot,z_j)\Bigg) {}^{\log{}}\omega^{(g)}_{m}(z'_{\set{m}})
		\\ \notag & \quad
		+ \delta_{m,1}[\hbar^{2g}] \sum_{i=1}^M  \res_{z'= a_i} \Bigg(\int^{z'}_{a_i} B(\cdot,z_1)\Bigg)
		\left(\frac{1}{\alpha_i\cS(\alpha_i\hbar \partial_{x'})}\log(z'-a_i)\right)dx'.
	\end{align}
\end{definition}

In other words, we say that in the case of logarithmic topological recursion the meromorphic differentials ${}^{\log{}}\omega^{(g)}_{m}$, $2g-2+m>0$, $m\geq 2$, have no other poles than at $\{p_1,\dots,p_N\}$  and all their $\mathfrak{A}$-periods are equal to zero.
For $g\geq 1$ and $m=1$ we have a bit different setup: they
have poles at $\{p_1,\dots,p_N\}$, and also at the points $\{a_1,\dots,a_M\}$. The principal parts at the latter points are equal to the principal parts of expressions given by~\eqref{eq:PrincipalPartsLog}, and all their $\mathfrak{A}$-periods are equal to zero.

\begin{remark} Note that if $a$ is a simple pole of~$dy$ and $dx$ has also a pole at this point, that is, singularity of $y$ is not LogTR-vital, then 
	the principal part of~$[\hbar^{2g}]\bigl( \frac{1}{\alpha\cS(\alpha\hbar \partial_{x})}\log(z-a)\bigr)dx$ with $g>0$ is equal to zero,
	so we could formally include these points to the list of LogTR-vital singularities of~$y$ without changing relation of recursion. But it is important to underline that such singular points do not contribute to the recursion and do not affect Equation~\eqref{eq:LogProjection}.
\end{remark}

\begin{remark}
	The poles of order greater than $1$ are excluded from the logarithmic projection property by similar local considerations. Indeed, assume that the meromorphic form~$dy$ varies in a family such that two simple poles collapse together producing a pole of order~$2$. Then, it is easy to see by local computations that the residues at these two poles tend to infinity when the points approach one another. It follows that the limiting contribution of these two points to~\eqref{eq:PrincipalPartsLog} is trivial, independently of whether the limiting pole of order two has a residue or not. The same argument also works for the higher order poles.
\end{remark}

\begin{remark}
	In the remaining part of the paper we always use LogTR (not the original TR), but we omit ``$\log$''  in the notation for the $n$-point functions, i.e. we write $\omega^{(g)}_{m}$ in place of ${}^{\log{}}\omega^{(g)}_{m}$.
\end{remark}

\section{Review of \texorpdfstring{$x-y$}{x-y} duality} \label{sec:Review-x-y-duality}

In this section we remind the reader a definition and the main properties of the $x-y$ duality, see \cite{ABDKS1,ABDKS3,ABDKS4} for details. 

\subsection{Spectral curve data}

A \emph{spectral curve data} is a compact Riemann surface (a smooth complex algebraic curve) $\Sigma$ equipped with a pair of functions $(x,y)$ on it such that $dx$ and $dy$ are meromorphic $1$-forms. We do not assume that $x$ and $y$ are themselves meromorphic; these functions could be multivalued and possess logarithmic singularities.

Let $\{\omega^{(g)}_n\}$ be a system of symmetric meromorphic $n$-differentials on~$\Sigma^n$ defined for $g\ge0$, $n\ge1$, $(g,n)\ne(0,1)$ such that $\omega^{(g)}_n-\delta_{(g,n),(0,2)}\frac{dx_1dx_2}{(x_1-x_2)^2}$ is regular on the diagonals in $\Sigma^n$. We extend this system of differentials to the case $(g,n)=(0,1)$ by setting
\begin{equation}
\omega^{(0)}_1=y\,dx
\end{equation}
even though this form may not be well defined as a meromorphic one. Let us stress here that we do not assume that the differentials satisfy any topological recursion.

\begin{remark}
The definition of the spectral curve data that can be found in the literature often includes also a choice for the 2-differential $B=\omega^{(0)}_2$. In most cases of our particular interest, we have $\Sigma=\C \mathrm{P}^1$ and $\omega^{(0)}_2$ is uniquely defined by $\omega^{(0)}_2=\frac{dz_1dz_2}{(z_1-z_2)^2}$ for an arbitrary affine coordinate~$z$. By that reason we will often denote by $z_i$ a point on the $i$th copy of the Cartesian product $\Sigma^n$ identifying this point with the value of the coordinate $z$ at this point, and denote $x_i=x(z_i)$, $y_i=y(z_i)$. Note, however, that the basic definitions and relations of this paper can be applied to a spectral curve of any genus (or to a different choice for $\omega^{(0)}_2$ in the case of a rational spectral curve).
\end{remark}

\subsection{Extended \texorpdfstring{$n$}{n}-point differentials}

The $x-y$ duality construction involves summation over hypergraphs as an intermediate step.

\begin{definition}\label{def:OmegabW}
Given a system of differentials $\{\omega^{(g)}_n\}$, the \emph{extended $n$-point differentials} $\Omega_n=\sum_{g=0}^\infty\hbar^{2g-2+n}\Omega^{(g)}_n$, $\bW_n=\sum_{g=0}^\infty\hbar^{2g-2+n}\bW^{(g)}_n$ are defined by
\begin{align}\label{eq:Omegan}	
\Omega_{n}(z^+_{\set{n}},z^-_{\set{n}})
	 &=\prod_{i=1}^n\tfrac{\sqrt{dx_i^+ dx_i^-}}{x_i^+-x_i^-}\;
\sum_{\Gamma} \tfrac{1}{|\mathrm{Aut}(\Gamma)|}
	\prod_{e\in E(\Gamma)} \sum_{\tilde g=0}^\infty \hbar^{2\tilde g-2+|e|}
\int\limits_{z_{e(1)}^-}^{z_{e(1)}^+}\dots \int\limits_{z_{e(|e|)}^-}^{z_{e(|e|)}^+}
\tilde\omega^{(\tilde g)}_{|e|},
\\\label{eq:bWn}	
\bW_{n}(z_{\set{n}},u_{\set{n}})
	 &=\prod_{i=1}^n  \Bigl(e^{u_i(\cS(u_i\hbar\partial_{x_i})-1)y_i}
\restr{z_i^{\pm}}{e^{\pm \frac{u_i\hbar}{2}\partial_{x_i}}z_i}\Bigr)
\;\Omega_{n}(z^+_{\set{n}},z^-_{\set{n}}).
\end{align}
Here 
\begin{itemize}
\item $x_i=x(z_i)$, $y_i=y(z_i)$, where $z_i$ is a point on the $i$-th copy of $\Sigma$ in $\Sigma^n$, and similarly $x_i^\pm=x(z_i^\pm)$, where $(z_i^+,z_i^-)$ are points near the diagonal on the $i$-th copy of $\Sigma^2$ in $(\Sigma^2)^n$. 

\item The sum is taken over all connected  bipartite graphs (or hypergraphs or just graphs with multiedges, for brevity)  $\Gamma$ with $n$ labeled vertices and multiedges of arbitrary index $\geq 1$ (in the bipartite graph picture we refer to the first set of vertices as to just ``vertices'', and to the second set as ``multiedges''), where the index of a multiedge is the number of its ``legs'' (i.e. the valence of the respective vertex in the bipartite graph picture) and we denote it by $|e|$. Multiple multiedges connecting the same set of vertices are allowed, and multiple ``legs'' of a multiedge going to the same vertex is also allowed.

\item For a multiedge $e$ with index $|e|$  we control its attachment to the vertices by the associated map $e\colon \set{|e|} \to \set{n}$ that we denote also by $e$, abusing notation (so $e(j)$ is the label of the vertex to which the $j$-th ``leg'' of the multiedge $e$ is attached).

\item Terms with $(g,n)=(0,1)$ are excluded from the summation, that is, we set $\tilde\omega^{(0)}_1=0$, and the meaning of $\tilde \omega^{(0)}_{|e|}$ in the case $|e|=2$ depends on whether the two legs $e(1)$ and $e(2)$ are the same or not. 

\item We set $\tilde\omega^{(0)}_{|e|}(z_1,z_2)=\omega^{(0)}_2(z_1,z_2)- \frac{dx_1dx_2}{(x_1-x_2)^2}$ if $e(1)=e(2)$, and we set  $\tilde\omega^{(0)}_{|e|}=\omega^{(0)}_2$ otherwise. For all $(g,n)\neq (0,2)$ or $(0,1)$ we simply have $\tilde \omega^{(g)}_{n} =  \omega^{(g)}_{n}$.

\item $|\mathrm{Aut}(\Gamma)|$ stands for the number of automorphisms of $\Gamma$.
\end{itemize}
\end{definition}

\begin{remark}\label{rem:bW-polynomiality}
The arguments $(z^+_{i},z^-_{i})$ of $\Omega_{n}$ are treated as a point of the Cartesian square $\Sigma^2$ and are considered in a certain vicinity (tubular neighborhood) of the diagonal $z_i^+=z_i^-$. The arguments $(z_i,u_i)$ of $\bW_n$ provide a particular parametrization of this vicinity through the substitutions
\begin{equation}
z_i^{+}=e^{\frac{u_i\hbar}{2}\partial_{x_i}}z_i,\qquad
z_i^{-}=e^{-\frac{u_i\hbar}{2}\partial_{x_i}}z_i.
\end{equation}
\end{remark}

\begin{remark}
The half-differentials $\sqrt{dx_i^\pm}$ entering $\Omega_n$ are understood formally. They attain a more definite meaning after the above substitution only. Namely, we have
\begin{equation}
\restr{z_i^{\pm}}{e^{\pm \frac{u_i\hbar}{2}\partial_{x_i}}z_i} x_i^\pm=x_i\pm\frac{u_i\hbar}{2},
\qquad\restr{z_i^{\pm}}{e^{\pm \frac{u_i\hbar}{2}\partial_{x_i}}z_i}
\frac{\sqrt{dx_i^+dx_i^-}}{x_i^+-x_i^-}=\frac{dx_i}{u_i\hbar}.
\end{equation}
\end{remark}

\begin{remark}\label{rmk3.5} We regard $\bW_n$'s as $n$-differentials on $\Sigma^n$ depending on additional parameters $u_1,\dots,u_n$.
One can see that they are expressed in terms of the differentials $\omega^{(\tilde g)}_{\tilde n}$ and their derivatives (no actual integration is needed). This fact follows from the equality
\begin{equation}
\prod_{i=1}^{\tilde n}\restr{\tilde z_i^{\pm}}{e^{\pm \frac{u_{e(i)}\hbar}{2}\partial_{x_{e(i)}}}z_{e(i)}}
\;\int\limits_{\tilde z_{1}^-}^{\tilde z_{1}^+}\dots \int\limits_{\tilde z_{\tilde n}^-}^{\tilde z_{\tilde n}^+}
\tilde\omega^{(\tilde g)}_{\tilde n}
=\prod_{i=1}^{\tilde n}\bigl(\restr{\tilde z_i}{z_{e(i)}} u_{e(i)}\hbar\,\cS(u_{e(i)}\hbar\,\partial_{\tilde x_i})\bigr)\;
\frac{\tilde\omega^{(\tilde g)}_{\tilde n}(\tilde z_{\set{\tilde n}})}
{\prod_{i=1}^{\tilde n} d\tilde x_i}.
\end{equation}
Note also that the coefficient of each power of~$\hbar$ in $\bW_n-\delta_{n,1}\frac{dx_1}{u_1\hbar}$ is a polynomial in $u$-variables, thus $\bW^{(g)}_n$ for any $g$ and $n$ is a differential polynomial in $\omega$'s.
\end{remark}

\begin{remark}
	A nice property of the (half-)differentials~$\Omega_n$ is that they are actually independent of $x$ and $y$ and are uniquely determined by the differentials $\omega^{(g)}_n$ with $(g,n)\ne(0,1)$. Indeed, a different choice of the $x$-function changes both the factors $\frac{\sqrt{dx_i^+dx_i^-}}{x_i^+-x_i^-}$ and the regularization terms $\frac{d\tilde x_1 d\tilde x_2}{(\tilde x_1-\tilde x_2)^2}$ entering the definition of $\tilde\omega^{(0)}_2$. An easy computation shows that these two changes compensate one another, see~\cite[Equation (3.10)]{ABDKS3}.
\end{remark}

\begin{example}\label{ex:bWtrivial} Assume that $\omega^{(g)}_n=0$ for all $(g,n)$ except $\omega^{(0)}_1=y dx$ and $\omega^{(0)}_2=\frac{dz_1dz_2}{(z_1-z_2)^2}$. Then, using the Cauchy determinant formula (see e.~g.~\cite{ABDKS4,hock2023xy}), one computes
\begin{equation}
\Omega_n=\prod_{i=1}^n \sqrt{dz_i^+dz_i^-}
\;(-1)^{n-1}\!\!\sum_{\sigma\in\{n\text{-cycles}\}}\prod_{i=1}^n\frac{1}{z_i^+-z_{\sigma(i)}^-}.
\end{equation}

Next, assume that $x(z)=z$. Then
\begin{equation}\label{eq:Wtrivz}
\bW_n=\prod_{i=1}^n \Bigl(e^{u_i(\cS(u_i\hbar\partial_{z_i})-1)y_i}dz_i\Bigr)
\;(-1)^{n-1}\!\!\sum_{\sigma\in\{n\text{-cycles}\}}\prod_{i=1}^n\frac{1}{z_i+\frac{u_{i}\hbar}{2}-z_{\sigma(i)}+\frac{u_{\sigma(i)}\hbar}{2}}.
\end{equation}

In a similar situation with $x(z)=\log z$ one has
\begin{equation}\label{eq:Wtrivlogz}
\bW_n=\prod_{i=1}^n \Bigl(e^{u_i(\cS(u_i\hbar z_i\partial_{z_i})-1)y_i}dz_i\Bigr)
\;(-1)^{n-1}\!\!\sum_{\sigma\in\{n\text{-cycles}\}}\prod_{i=1}^n\frac{1}{e^{\frac{u_{i}\hbar}{2}}z_i-e^{-\frac{u_{\sigma(i)}\hbar}{2}}
z_{\sigma(i)}}.
\end{equation}
\end{example}

\begin{remark}
In this paper our sign conventions are different comparing to the previous papers including \cite{ABDKS3,ABDKS4}. Moreover, the definition of the extended $n$-point differentials $\Omega_n$ is slightly different comparing to \cite{ABDKS3}, were by $\Omega_n$ we denote corresponding functions. 
\end{remark}

\subsection{\texorpdfstring{$x-y$}{x-y} duality and its basic properties}

Starting from a given system of differentials $\{\omega^{(g)}_n\}$,  the $x-y$ duality construction produces a new one denoted by $\{\omega^{\vee,(g)}_n\}$ with the role of $x$ and $y$ functions swapped. That is, we have
\begin{align}
(x^\vee,y^\vee)&=(y,x),\\
\omega^{\vee,(0)}_1&=y^\vee\,dx^\vee=x\,dy.
\end{align}

\begin{definition} \label{def:xy-duality}
The $x-y$ dual differentials are defined by
\begin{equation}
\omega^{\vee,(g)}_n(z_{\set n})
=(-1)^n[\hbar^{2g-2+n}]\prod_{i=1}^n
\biggl(\sum_{r=0}^\infty \bigl(-d_i\tfrac{1}{dy_i}\bigr)^{r}[u_i^r]\biggr)
    \;\bW_n(z_{\set n},u_{\set n})
\end{equation}
for $(g,n)\ne(0,1)$ and  $\omega^{\vee,(0)}_1=x\, dy$, where $\bW_n$ is the $n$-point differential of the previous section and $d\frac{1}{dx}$ is the operator acting in the space of meromorphic differentials and sending a given $1$-form $\eta$ to $d\frac{\eta}{dx}$.
\end{definition}

The polynomiality property of Remark~\ref{rmk3.5} implies that the expression for $\omega^{\vee,(g)}_n$ involves a finite number of summands only. For example, for small $(g,n)$ we have
\begin{align}
\omega^{\vee,(0)}_2&=\omega^{(0)}_2,
\\\omega^{\vee,(0)}_3&=-\omega^{(0)}_3
+d_1\tfrac{\omega^{(0)}_2(z_1,z_2)\omega^{(0)}_2(z_1,z_3)}{dy_1dx_1}
+d_2\tfrac{\omega^{(0)}_2(z_2,z_3)\omega^{(0)}_2(z_2,z_1)}{dy_2dx_2}
+d_3\tfrac{\omega^{(0)}_2(z_3,z_1)\omega^{(0)}_2(z_3,z_2)}{dy_3dx_3},
\\\omega^{(0),{\rm reg}}_2&=\omega^{(0)}_2-\tfrac{dx_1dx_2}{(x_1-x_2)^2},
\\\omega^{\vee,(1)}_1&=-\omega^{(1)}_1+d\biggl(
\frac{\omega^{(0),{\rm reg}}_2(z,z)}{2\,dx\,dy}
-\frac1{24}\partial_y^2\frac{dy}{dx}
\biggr).
\end{align}

Here are the basic properties of $x-y$ duality, for more details see \cite{ABDKS1,ABDKS3,ABDKS4}.

\begin{theorem}\label{Th:x-y}
	\hspace{1em}
\begin{enumerate}[label={\arabic*.},itemsep=5pt]
	
\item 
The dual differentials $\omega^{\vee,(g)}_n$ are meromorphic, symmetric, and possess the property that $\omega^{\vee,(g)}_n-\delta_{(g,n),(0,2)}\frac{dx^\vee_1dx^\vee_2}{(x^\vee_1-x^\vee_2)^2}$ are regular on the diagonals.

\item 
This correspondence is a duality indeed: the original differentials $\omega^{(g)}_n$ can be recovered from the dual ones by a similar formula with the role of~$x$ and~$y$ functions swapped
\begin{equation}\label{omegaveeW}
\omega^{(g)}_n(z_{\set n})
=(-1)^n[\hbar^{2g-2+n}]\prod_{i=1}^n
\biggl(\sum_{r=0}^\infty \bigl(-d_i\tfrac{1}{dx_i}\bigr)^{r}[v_i^r]\biggr)
    \;\bW^\vee_n(z_{\set n},v_{\set n}),
\end{equation}
$(g,n)\not=(0,1)$, where the differentials $\bW^\vee_n$ are obtained from the differentials $\omega^{\vee,(\tilde g)}_{\tilde n}$ through the differentials $\Omega^\vee_n$ in a way similar to \eqref{eq:Omegan}--\eqref{eq:bWn} by summation over graphs.

\item 
Moreover, the extended dual differentials  $\bW_n$ and $\bW^\vee_n$ can be obtained from one another directly by the following explicit relations that avoid combinatorics of summation over graphs
\begin{align}\label{eq:bWtobWvee}
\bW^{\vee}_n(z_{\set n},v_{\set n})
&=(-1)^n\prod_{i=1}^n
\biggl(e^{-v_ix_i}\sum_{r=0}^\infty \bigl(-d_i\tfrac{1}{dy_i}\bigr)^{r}e^{v_ix_i}[u_i^r]\biggr)
    \;\bW_n(z_{\set n},u_{\set n})+\delta_{n,1}\frac{dy_1}{v_1\hbar},
\\\label{eq:bWveetobW}
\bW_n(z_{\set n},u_{\set n})
&=(-1)^n\prod_{i=1}^n
\biggl(e^{-u_iy_i}\sum_{r=0}^\infty \bigl(-d_i\tfrac{1}{dx_i}\bigr)^{r}e^{u_iy_i}[v_i^r]\biggr)
    \;\bW^\vee_n(z_{\set n},v_{\set n})+\delta_{n,1}\frac{dx_1}{u_1\hbar}.
\end{align}

\item The (half-)differentials $\Omega_n$ and $\Omega^\vee_n$ are related to each other by integral transforms, defined by an asymptotic expansion of Gaussian integrals (\cite[Theorem 4.3]{ABDKS3}). 

\item 
The $x-y$ duality preserves KP integrability (\cite[Theorem 2.3]{ABDKS3}). Namely, the system of differentials $\{\omega^{(g)}_n\}$ possesses the KP integrability property if and only if the differentials $\{\omega^{\vee,(g)}_n\}$ possesses the KP integrability property. More explicitly, denote
\begin{equation}
\bK(z^+,z^-)=\Omega_1(z^+,z^-),\qquad
\bK^\vee(z^+,z^-)=\Omega^\vee_1(z^+,z^-).
\end{equation}
Then the following determinantal relations for the differentials $\omega^{(g)}_n$ hold true
\begin{align}
\sum_{g=0}^\infty\hbar^{2g-2+n}\omega^{(g)}_n&=
(-1)^{n-1}\!\!\sum_{\sigma\in\{n\text{-cycles}\}}\prod_{i=1}^n\bK(z_i,z_{\sigma(i)}),\qquad n\ge2,
\\\sum_{g=1}^\infty\hbar^{2g-1}\omega^{(g)}_1&=\lim_{z_2\to z_1}
   \Bigl(\bK(z_1,z_2)-\tfrac{\sqrt{dx_1\,dx_2}}{x_1-x_2}\Bigr)
\end{align}
if and only if the dual differentials satisfy similar determinantal relations
\begin{align}
\sum_{g=0}^\infty\hbar^{2g-2+n}\omega^{\vee,(g)}_n&=
(-1)^{n-1}\!\!\sum_{\sigma\in\{n\text{-cycles}\}}\prod_{i=1}^n\bK^\vee(z_i,z_{\sigma(i)}),\qquad n\ge2,
\\\sum_{g=1}^\infty\hbar^{2g-1}\omega^{\vee,(g)}_1&=\lim_{z_2\to z_1}
   \Bigl(\bK^\vee(z_1,z_2)-\tfrac{\sqrt{dy_1\,dy_2}}{y_1-y_2}\Bigr).
\end{align}

Moreover, in the KP integrable case the differentials $\Omega_n$ and their $x-y$ dual differentials $\Omega^\vee_n$ admit determinantal presentations
\begin{align}\label{eq:Omega-Kdet}
\Omega_n(z^+_{\llbracket n \rrbracket},z^-_{\llbracket n \rrbracket})
 &=(-1)^{n-1} \sum_{\sigma\in \{n\text{-cycles}\}} \prod_{i=1}^n \Omega_1(z^+_{i},z^-_{\sigma(i)}), & n\geq 1,
\\\label{eq:Omega-vee-Kdet}
\Omega^\vee_n(z^+_{\llbracket n \rrbracket}, z^-_{\llbracket n \rrbracket})
 &=(-1)^{n-1} \sum_{\sigma\in \{n\text{-cycles}\}} \prod_{i=1}^n \Omega^\vee_1(z^+_{i}, z^-_{\sigma(i)}), & n\geq 1.
\end{align}

\item 
Let $p$ be a zero of $dx$ which is not a zero of $dy$. Then all differentials $\omega^{\vee,(g)}_n$, $2g-2+n\geq 0$, are regular at $p$ if and only if
the differentials $\omega^{(g)}_n$ satisfy the loop equations at $p$.

Vice versa, let $p$ be a zero of $dy$ which is not a zero of $dx$. Then all differentials $\omega^{(g)}_n$, $2g-2+n\geq 0$, are regular at $p$ if and only if the differentials $\omega^{\vee,(g)}_n$  satisfy the loop equations at $p$.

\item \label{en:xydualTR} 
 
The $x-y$ duality preserves (logarithmic) topological recursion. Namely, let $dx$ and $dy$ have non-intersecting sets of zeros, all of which are simple, and $\omega^{(0)}_2=\omega^{\vee,(0)}_2$ be the Bergman kernel, that is, it has no singularities away from the diagonal and its ${\mathfrak A}$-periods are vanishing for some choice of collection of $({\mathfrak A},{\mathfrak B})$ cycles on~$\Sigma$. If the functions $x$ and $y$ themselves are meromorphic, then the differentials $\omega^{(g)}_n$ solve topological recursion for the spectral curve data $(x,y)$ if and only if the differentials $\omega^{\vee,(g)}_n$ solve topological recursion for the spectral curve data $(y,x)$.

More generally, if the differentials $dx$ and $dy$ are meromorphic, then the differentials $\omega^{(g)}_n$ solve logarithmic topological recursion for the spectral curve data $(x,y)$ if and only if the differentials $\omega^{\vee,(g)}_n$ solve logarithmic topological recursion for the spectral curve data $(y,x)$.

\end{enumerate}
\end{theorem}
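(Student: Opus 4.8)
The plan is to treat Theorem~\ref{Th:x-y} as a cascade of consequences of a single backbone identity: the explicit mutual transformation between the extended differentials $\bW_n$ and $\bW^\vee_n$ recorded in item~3, relations \eqref{eq:bWtobWvee}--\eqref{eq:bWveetobW}. Everything else follows from this backbone together with the structural facts already assembled in Remark~\ref{rmk3.5} and Example~\ref{ex:bWtrivial}. To establish item~3 itself I would start from the graph-sum definitions \eqref{eq:Omegan}--\eqref{eq:bWn}: swapping $x$ and $y$ reorganizes the same sum over bipartite graphs, with the operator $\sum_{r\ge0}(-d\,\tfrac{1}{dy})^r[u^r]$ and the conjugation by $e^{\pm v x}$ implementing a Laplace/Fourier-type transform of the $u$-dependence. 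The crucial point is that the two formulas are manifestly inverse under $(x,u)\leftrightarrow(y,v)$; this is forced by the functional-equation structure of the Cauchy kernel together with the shift operators $e^{\pm\frac{u\hbar}{2}\partial_x}$ made explicit in Example~\ref{ex:bWtrivial}. This reproduces, at the level of generating series, the computation of \cite{ABDKS1,ABDKS3}.

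With item~3 in hand, items~1 and 2 are immediate. Meromorphy and symmetry of the $\omega^{\vee,(g)}_n$ follow from Remark~\ref{rmk3.5}: each $\bW^{(g)}_n$ is a differential polynomial in the $\omega$'s, hence so is each $\omega^{\vee,(g)}_n$ produced by the finite coefficient extraction of Definition~\ref{def:xy-duality}, and symmetry is inherited from the simultaneous symmetry of $\bW_n$. The single diagonal singularity survives only from the $\tfrac{dx_1dx_2}{(x_1-x_2)^2}$ subtraction in $\tilde\omega^{(0)}_2$, which under the swap becomes $\tfrac{dx^\vee_1dx^\vee_2}{(x^\vee_1-x^\vee_2)^2}$, giving the stated regularity. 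For item~2 I would simply substitute \eqref{eq:bWtobWvee} into \eqref{eq:bWveetobW}; the two transforms compose to the identity, so the coefficient-extraction formula \eqref{omegaveeW} recovers $\omega^{(g)}_n$ from the dual differentials.

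For items~4 and 5 I would repackage the operator $\sum_{r\ge0}(-d\,\tfrac{1}{dy})^r[u^r]$ as a Gaussian integral transform acting on $\Omega_n$, understood as an asymptotic expansion; this identifies $\Omega^\vee_n$ with the transform of $\Omega_n$ and is \cite[Theorem~4.3]{ABDKS3}. KP integrability is then structural: this transform sends a rank-one kernel $\bK=\Omega_1$ to a rank-one kernel $\bK^\vee=\Omega^\vee_1$, so the determinantal cycle-sum presentation \eqref{eq:Omega-Kdet} is preserved on passing to the dual, while the determinantal form itself is the Cauchy determinant of Example~\ref{ex:bWtrivial}. Hence the $n$-point differentials admit the stated free-fermionic presentation on one side if and only if they do on the other.

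Finally, for items~6 and 7 I would carry out the local analysis at the zeros of $dx$ and of $dy$. At a zero $p$ of $dx$ that is not a zero of $dy$, the linear and quadratic loop equations for $\{\omega^{(g)}_n\}$ are precisely the vanishing conditions, under the deck transformation $\sigma$, that the transform of item~3 converts into regularity of $\{\omega^{\vee,(g)}_n\}$ at $p$, and symmetrically with $x$ and $y$ interchanged. Item~7 in the meromorphic case then follows by combining both directions: the projection property of Definition~\ref{def:PP} says exactly that the only poles of $\omega^\vee_n$ sit at the zeros of $dx^\vee=dy$, which by item~6 is equivalent to the loop equations for $\omega_n$. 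The genuinely hard part, and the main obstacle, is the logarithmic refinement: I expect the delicate step to be showing that the extra principal-part contributions \eqref{eq:PrincipalPartsLog} at the LogTR-vital singularities are produced and consumed correctly by the transform, so that the logarithmic projection property \eqref{eq:LogProjection} is matched on both sides. This demands the local transalgebraic computation of \cite{ABDKS4} at the simple poles of $dy$, tracking the $\cS(\alpha\hbar\partial_x)$-dressed logarithm through the $x-y$ transform, which is where the careful residue bookkeeping concentrates.
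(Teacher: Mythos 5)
First, a point of comparison you could not have known: the paper does not prove Theorem~\ref{Th:x-y} at all. Section~\ref{sec:Review-x-y-duality} is explicitly a review, and the theorem is stated with pointers to \cite{ABDKS1,ABDKS3,ABDKS4}, where the actual proofs live. Measured against that, your roadmap gets the logical architecture right: item~3 (the explicit transforms \eqref{eq:bWtobWvee}--\eqref{eq:bWveetobW}) is indeed the backbone from which items~1 and~2 follow, items~4--5 are the content of \cite[Theorem 2.3 and Theorem 4.3]{ABDKS3}, item~6 comes from \cite{ABDKS1}, and the logarithmic half of item~7 is the main result of \cite{ABDKS4}. At each hard point you defer to the correct source, which mirrors what the paper itself does.

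However, read as a standalone proof, your sketch has genuine gaps exactly at the places you declare ``manifest'' or ``structural.'' (i) The backbone itself: Example~\ref{ex:bWtrivial} treats only the trivial system of differentials (all $\omega^{(g)}_n$ vanishing except $(0,1)$ and $(0,2)$), so the general identity cannot be ``forced'' by it. Moreover, the statement that matters is not that the two transforms in \eqref{eq:bWtobWvee}--\eqref{eq:bWveetobW} are mutually inverse as operators --- that is a formal computation --- but that the right-hand side of \eqref{eq:bWtobWvee} coincides with the graph-sum $\bW^\vee_n$ built from the dual differentials $\omega^{\vee,(g)}_n$ of Definition~\ref{def:xy-duality}. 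That intertwining of the transform with the two graph-sum constructions is the main theorem of \cite{ABDKS1}, proved there by a substantive argument; without it, your ``simply substitute one relation into the other'' does not yield item~2. (ii) In item~5, the claim that the Gaussian transform ``sends a rank-one kernel to a rank-one kernel'' is not structural: the transform acts jointly on each pair $(z_i^+,z_i^-)$, while each factor $\Omega_1(z_i^+,z_{\sigma(i)}^-)$ in the cycle sum couples variables from two \emph{different} pairs, so preservation of the determinantal ansatz \eqref{eq:Omega-Kdet}--\eqref{eq:Omega-vee-Kdet} is precisely the nontrivial content of \cite[Theorem 2.3]{ABDKS3}, not a formal consequence of rank-one-ness. (iii) For items~6--7 your local analysis is the right shape and you correctly isolate the principal-part bookkeeping at the LogTR-vital points \eqref{eq:PrincipalPartsLog} as the delicate step, but this too is deferred rather than carried out. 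In short: as a map of where the proofs are and how the items depend on one another, your proposal is accurate and matches the paper's citation structure; as a proof it is not self-contained at its central step, and the ``manifest inversion'' claim in particular would fail as written.
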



\begin{example}\label{ex:(logz,z)}
We demonstrate here one relatively simple but quite shining example of $x-y$ duality. More examples and applications can be found in~\cite{ABDKS1,ABDKS3,ABDKS4}. Consider a genus zero spectral curve with the spectral curve data
\begin{equation}
x=z,\qquad y=\log z.
\end{equation}
Both $dx=dz$ and $dx^\vee=dy=\frac{dz}{z}$ have no zeros and both topological recursions for this spectral curve and its dual are trivial. However, in the presence of logarithmic singularities, it is more natural to consider the logarithmic topological recursion. The LogTR differentials for this spectral curve and its dual are given by
\begin{align}\label{eq:(logz,z)}
\omega^{(g)}_1&=\Bigl([\hbar^{2g}]\tfrac{1}{\cS(\hbar\partial_z)}\log z\Bigr)\;dz,\\
\omega^{\vee,(0)}_1&=z\; d\log z,\\
\omega^{(0)}_2&=\omega^{\vee,(0)}_2=\frac{dz_1dz_2}{(z_1-z_2)^2},
\end{align}
and all other differentials $\omega^{(g)}_n$ and $\omega^{\vee,(g)}_n$ are equal to zero (here we omit the pre-superscript ${}^{\log{}}$ in the notation). Respectively, analogously to~\eqref{eq:Wtrivz}--\eqref{eq:Wtrivlogz} we obtain explicitly
\begin{align}
\bW_n&=\prod_{i=1}^n \Bigl(e^{u_i\bigl(\frac{\cS(u_i\hbar\partial_{z_i})}{\cS(\hbar\partial_{z_i})}-1\bigr)\log z_i}dz_i\Bigr)
\;(-1)^{n-1}\!\!\sum_{\sigma\in\{n\text{-cycles}\}}\prod_{i=1}^n\frac{1}{z_i+\frac{u_{i}\hbar}{2}-z_{\sigma(i)}+\frac{u_{\sigma(i)}\hbar}{2}},
\\\bW^\vee_n&=\prod_{i=1}^n \Bigl(e^{v_i(\cS(v_i\hbar)-1)z_i}dz_i\Bigr)
\;(-1)^{n-1}\!\!\sum_{\sigma\in\{n\text{-cycles}\}}\prod_{i=1}^n\frac{1}{e^{\frac{v_{i}\hbar}{2}}z_i-e^{-\frac{v_{\sigma(i)}\hbar}{2}}
z_{\sigma(i)}}.
\end{align}
These differentials are related to one another by~\eqref{eq:bWtobWvee}--\eqref{eq:bWveetobW}, which are nontrivial identities even for this elementary case.
\end{example}

\section{Symplectic duality} \label{sec:SymplecticDuality}

\subsection{Definition of symplectic duality}

Consider a spectral curve data $(\Sigma,x,y)$ and a function $\psi(\theta)$ defined on the rational curve. Define a new spectral curve data $(x^\dag,y^\dag)$ by
\begin{equation}
x^\dag(z)=x(z)+\psi(y(z)),\qquad y^\dag(z)=y(z),
\end{equation}
where $z$ is a point on $\Sigma$.
This correspondence is a duality in a sense that the inverse transformation is produced by the dual function $\psi^\dag(\ta)=-\psi(\ta)$,
\begin{equation}
 x(z)=x^\dag(z)+\psi^\dag(y(z)).
\end{equation}

The symplectic duality construction presented below involves not only the function $\psi$ but also some $\hbar$-deformation $\tilde\psi(\ta,\hbar)$ of it which is a power series in~$\hbar^2$ such that $\tilde\psi(\ta,0)=\psi(\ta)$. The analytic properties of $\psi$ and $\tilde\psi$ will be discussed later. Basically, for the following definition to make sense we require that $\psi'(y(z))$ and $[\hbar^{2g}]\tilde\psi(y(z),\hbar)$ for $g\ge1$ are meromorphic.

\begin{definition}
Given a system of symmetric differentials $\{\omega^{(g)}_n\}$ such that $\omega^{(0)}_1=y\,dx$, the symplectic duality correspondence produces a new system of symmetric differentials denoted by $\{\omega^{\dag,(g)}_n\}$ and given by the following relations:
\begin{align}
\omega^{\dag,(0)}_1&=y\,dx^\dag,
\\ \label{eq:bW-to-omegad}
\omega^{\dag,(g)}_n(z_{\set n})&=[\hbar^{2g-2+n}]U_1\dots U_n \bW_n(z_{\set n},u_{\set n}),\quad n\ge2,
\end{align}
where $\bW_n$ is the extended differential of Definition~\ref{def:OmegabW} and $U_i$ is the operator acting on differentials in~$z=z_i$ depending on an additional variable~$u=u_i$ in a polynomial way defined by
\begin{align}\label{eq:U}
UF&=\sum_{j,r\ge0}\bigl(-d\tfrac{1}{dx^\dag}\bigr)^j[v^j]
\Bigl(\restr{\theta}{y}e^{-v\psi(\theta)}\partial_\theta^r e^{v\cS(v\hbar\partial_\theta)\tilde\psi(\theta)}\Bigr)[u^r]F(z,u)
\\ \notag &=\sum_{j,r\ge0}\bigl(d\tfrac{1}{dx^\dag}\bigr)^j[v^j]
\Bigl(\restr{\theta}{y}e^{v\psi(\theta)}\partial_\theta^r e^{-v\cS(v\hbar\partial_\theta)\tilde\psi(\theta)}\Bigr)[u^r]F(z,u).
\end{align}
In the case $n=1$ the formula is a bit more involved:
\begin{align} \label{eq:bW-to-omegad-n1}
	\omega^{\dag,(g)}_1(z_1)&= 
	[\hbar^{2g-1}] U_1 \bW_1(z_1,u_1) + [\hbar^{2g}]\sum_{j=1}^\infty \bigl(d_1\tfrac{1}{dx^\dag_1}\bigr)^{j-1} [v^j]
	\left(
	e^{-v(\cS(v\hbar\partial_{y_1})\tilde \psi(y_1)-\psi(y_1))}
	dy_1
	\right).
\end{align}
\end{definition}

\subsection{Example: hypergeometric tau functions and associated differentials}

Let $\psi(\ta)$ and $y(z)$ be some formal power series such that $\psi(0)=0$, $y(0)=0$. Assume that we are given also some $\hbar^2$-deformations $\hat\psi(\ta,\hbar)$ and $\hat y(z,\hbar)$ of these functions. These data determine the so-called Orlov--Scherbin (or, in other terminology, hypergeometric) tau functions~\cite{OS} and the corresponding $n$-point differentials~\cite{BDKS1,BDKS2}. 

\begin{definition} The Orlov--Scherbin (or hypergeometric) tau function
	is given by the sum over partitions
	\begin{equation}\label{eq:OSpartfun}
		Z^{\mathrm{OS}}=\sum_{\lambda}\shin_\lambda(t)\shin_\lambda(\hbar^{-1} s)\exp\left(\sum_{(i,j)\in \lambda}\hat\psi(\hbar(i-j),\hbar)\right),
	\end{equation}
	where $\shin_\lambda(t)$ is the Schur function in the variables $t=(t_1,t_2,\dots)$ labeled by the partition $\lambda$ and the parameters $s_i=s_i(\hbar)$ for the second Schur function are the expansion coefficients defined by $\hat y(z,\hbar)=\sum_{j=1}^\infty s_j(\hbar)z^j$.
\end{definition}

\begin{definition} The Orlov--Scherbin differentials are defined as 
\begin{align}
	\sum_{g=0}^\infty \hbar^{2g-2+n} \omega^{\mathrm{OS},(g)}_n = \prod_{j=1}^\infty \restr{t_j}{0}\prod_{i=1}^n \bigg(d_i \sum_{k_i=1}^\infty X_i^{k_i} \partial_{t_{k_i}} \bigg)\log Z^{\mathrm{OS}}. 	
\end{align}
\end{definition}

Under some natural analyticity assumptions (see \cite[Definition 1.4]{BDKS2}) on the initial data, the Orlov--Scherbin  differentials extend as global meromorphic $n$-differentials on the spectral curve $\Sigma=\C \mathrm{P}^1$ and we have
\begin{align}
X&=z\,e^{-\psi(y(z))},
\\\omega^{{\rm OS},(0)}_1&=y\frac{dX}{X}.
\end{align}

Let us denote
\begin{equation}
x=\log X=\log z-\psi(y(z)),\quad x^\dag=\log z,
\quad \tilde\psi(\ta)=\frac{1}{\cS(\hbar\partial_\ta)}\hat\psi(\ta).
\end{equation}
Then we have $\omega^{{\rm OS},(0)}_1=y\,dx$ and the following proposition is a reformulation of \cite[Proposition~2.3]{BDKS2}.

\begin{proposition}\label{prop:OSsympl}
The Orlov--Scherbin  differentials $\omega^{(g)}_n=\omega^{{\rm OS},(g)}_n$ can be computed as symplectic dual to the 
system of differentials
\begin{equation}\label{eq:OSDualDiff}
\omega^{\dag,(g)}_n=\delta_{n,1}[\hbar^{2g}]\hat y_1dx_1^\dag+\delta_{(g,n),(0,2)}\tfrac{dz_1dz_2}{(z_1-z_2)^2}
\end{equation}
with respect to the function $\tilde\psi^\dag(\ta,\hbar)=-\tilde\psi(\ta,\hbar)=-\frac{1}{\cS(\hbar\partial_\ta)}\hat\psi(\ta,\hbar)$.
\end{proposition}

More explicitly, the formula for the differentials of Orlov--Scherbin  family for $g\geq 0$, $n\geq 1$ reads
\begin{align}\label{eq:OS}
\omega^{(g)}_n & =[\hbar^{2g-2+n}]
\prod_{i=1}^n\biggl(
\sum_{j,r\ge0}\bigl(d_i\tfrac{1}{dx_i}\bigr)^j
\Bigl(\restr{\theta}{y_i}[v^j]e^{-v\psi(\theta)}\partial_\theta^r e^{v\cS(v\hbar\partial_\theta)\tilde\psi(\theta)}\Bigr)[u_i^r]
\biggr)
\\ \notag & \qquad 
\prod_{i=1}^n\Bigl(e^{u_i(\cS(u_i\hbar\,z_i\partial_{z_i})\hat y_i-y_i)}dz_i\Bigr) \cdot
 (-1)^{n-1}\!\!\sum_{\sigma\in\{n\text{-cycles}\}}\prod_{i=1}^n\frac{1}{e^{\frac{u_{i}\hbar}{2}}z_i-e^{-\frac{u_{\sigma(i)}\hbar}{2}}
    z_{\sigma(i)}}
	\\ \notag & \quad+
\delta_{n,1}[\hbar^{2g}]\sum_{j=0}^\infty \bigl(d_1\tfrac{1}{dx_1}\bigr)^{j-1} [v^j]
\left(
e^{v(\cS(v\hbar\partial_{y_1})\tilde \psi(y_1)-\psi(y_1))}
dy_1
\right).
\end{align}

For $\hat y$ we use the notation $\hat y_i := \hat y (z_i,\hbar)$ too.

\begin{proof}[Proof of Prop.~\ref{prop:OSsympl}] We refer the reader to \cite[Proposition~2.3]{BDKS2}. The only difference between the expression given there and Equation~\eqref{eq:OS} is that the former uses the following expression instead of the second factor in the second line of~\eqref{eq:OS}:
\begin{align}
	\prod_{i=1}^n\frac{1}{u_i\hbar\cS(u_i\hbar)z_i} \sum_{\gamma\in\Gamma_n}\prod_{(v_k,v_l)\in E_\gamma} w_{k,l},
\end{align}
where the sum is taken over all simple connected graphs $\gamma$ on $n$ labeled vertices, and the edge $(v_k,v_l)$ in the set of edges $E_\gamma$ of $\gamma$ is decorated by 
\begin{align}
	w_{k,\ell}&\coloneqq e^{\hbar^2u_ku_l\cS(u_k\hbar Q_k D_k)\cS(u_\ell\hbar Q_\ell D_\ell)\frac{z_k z_\ell}{(z_k-z_\ell)^2}}-1.
\end{align}
The identification of this expression with 
\begin{align}
	(-1)^{n-1}\!\!\sum_{\sigma\in\{n\text{-cycles}\}}\prod_{i=1}^n\frac{1}{e^{\frac{u_{i}\hbar}{2}}z_i-e^{-\frac{u_{\sigma(i)}\hbar}{2}}
		z_{\sigma(i)}}
\end{align}
 is the connected version of the Cauchy determinant formula
\begin{align}
(-1)^{n-1}\!\!\sum_{\sigma\in\{n\text{-cycles}\}}\prod_{i=1}^n\frac{1}{a_i-b_{\sigma(i)}}
=\prod_{i=1}^n\frac{1}{a_i-b_i}\;\sum_{\gamma\in\Gamma_n}\prod_{(v_k,v_\ell)\in E_\gamma}\Bigl(\tfrac{(a_k-a_\ell)(b_k-b_\ell)}{(a_k-b_\ell)(b_k-a_\ell)}-1\Bigr)	
\end{align}
applied to $a_k=e^{\frac{u_k\hbar}{2}}z_k=e^{\frac{u_k\hbar}{2}z_k\partial_{z_k}}z_k$,  
$b_k=e^{-\frac{u_k\hbar}{2}}z_k=e^{-\frac{u_k\hbar}{2}z_k\partial_{z_k}}z_k$.
	
	In the case $n=1$, $g\ge1$ the expression for $\omega^{(g)}_n$ contains an additional term, see \cite[Equation~(32)]{BDKS2}, which matches with the last summand in~\eqref{eq:OS}.
\end{proof}

\subsection{Example: \texorpdfstring{$(t,s)$}{(t,s)}-weighted hypermaps}

To a given series $\phi(\ta)$ such that $\phi(0)=1$ one associates a tau function depending on two sets of times $t=(t_1,t_2,\dots)$ and $s=(s_1,s_2,\dots)$ and a two-index set of $(m,n)$-point functions $H^{(g)}_{m,n}(X_1,\dots,X_m,Y_1,\dots,Y_n)$ (depending also on $(t,s)$-parameters). In the case $\phi(\ta)=1+\ta$ these functions enumerate genus~$g$ hypermaps with marked and unmarked vertices of two colors known also as cumulants of 2-matrix model, see~\cite[Introduction]{BDKS4}.

More precisely, we use the following modification of the construction given in the previous section. We consider  
\begin{equation}
	Z=\sum_{\lambda}\shin_\lambda(\hbar^{-1} t)\shin_\lambda(\hbar^{-1} s)\prod_{(i,j)\in \lambda}\phi(\hbar(i-j)),
\end{equation}
for $\phi(\theta)\coloneqq 1+\theta$ and we define $H^{(g)}_{m,n}$ as 
\begin{align}
	\sum_{g=0}^\infty \hbar^{2g-2} H^{(g)}_{m,n} = \prod_{i=1}^m \bigg( \sum_{k_i=1}^\infty X_i^{k_i} \partial_{t_{k_i}} \bigg)\prod_{i=1}^n \bigg( \sum_{\ell_i=1}^\infty Y_i^{\ell_i} \partial_{s_{\ell_i}} \bigg)\log Z. 	
\end{align}
For a general $\phi$ they enumerate certain weighted hypermaps or weighted Hurwitz numbers, see e.g.~\cite{ALS,Harnad}.

Consider the case when $\phi$ is a polynomial and in $H^{(g)}_{m,n}$ we set all but finitely many $(t,s)$ parameters to zero. 
Under these assumptions, one constructs in~\cite{BDKS4} a rational spectral curve $\Sigma = \C\mathrm{P}^1$ equipped with three rational functions $X,Y,\Theta$ satisfying $X\,Y\,\phi(\Theta)=1$ such that $X$ is a local coordinate at the point $z=0$, and $Y$ is a local coordinate at the point $z=\infty$. It is proved that the function $H^{(g)}_{m,n}$ introduced initially as a power series in $X,Y$ variables extends as a global rational function on $\Sigma^{m+n}$.
Denote
\begin{align}
x&=\log X,\qquad x^\dag=-\log Y,\qquad y=\Theta,
\\\psi(\ta)&=\log\phi(\ta),
\\\omega^{(0)}_1&=\Theta\frac{dX}{X}=y\,dx,\qquad
\omega^{\dag,(0)}_{1}=-\Theta\frac{dY}{Y}=y\,dx^\dag,
\\\omega^{(0)}_2&=\omega^{\dag\,(0)}_2=\frac{dz_1dz_2}{(z_1-z_2)^2},
\\\omega^{(g)}_n&=d_1\dots d_n H^{(g)}_{n,0},\quad 2g-2+n>0,
\\\omega^{\dag,(g)}_n&=(-1)^n d_1\dots d_n H^{(g)}_{0,n},\quad 2g-2+n>0.
\end{align}

With this notation, the results of~\cite{BDKS4} expressing $H^{(g)}_{n,0}$ via  $H^{(h)}_{0,m}$ can be reformulated in the following way

\begin{theorem}
The differentials $\omega^{(g)}_n$, $\omega^{\dag,(g)}_n$
are symplectic dual for the spectral curves $(\log X,\Theta)$ and $(-\log Y,\Theta)$ and computed with respect to the $\hbar$-deformation $\tilde\psi(\ta)=\frac{1}{\cS(\hbar \partial_\ta)}\log\phi(\ta)$ of the function $\log\phi$. 
\end{theorem}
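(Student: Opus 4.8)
The plan is to show that the $X$- and $Y$-differentials of the $(t,s)$-weighted hypermap model, both extracted from the single tau function $Z$, are interchanged by exactly the symplectic duality transformation of Section~\ref{sec:SymplecticDuality}, and that this is a repackaging of the explicit $H_{n,0}\leftrightarrow H_{0,m}$ correspondence of~\cite{BDKS4}. First I would check that the data genuinely falls within the symplectic duality framework. The defining relation $X\,Y\,\phi(\Theta)=1$ gives $\log X+\log Y+\log\phi(\Theta)=0$, i.e. $-\log Y=\log X+\log\phi(\Theta)$; with $x=\log X$, $x^\dag=-\log Y$, $y=y^\dag=\Theta$ and $\psi=\log\phi$ this is precisely the change of polarization $x^\dag=x+\psi(y)$, whose inverse is governed by $\psi^\dag=-\psi$. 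One then verifies the unstable terms: $\omega^{(0)}_1=\Theta\,dX/X=y\,dx$ and $\omega^{\dag,(0)}_1=-\Theta\,dY/Y=y\,dx^\dag$ have the prescribed forms, and $\omega^{(0)}_2=\omega^{\dag,(0)}_2=\frac{dz_1dz_2}{(z_1-z_2)^2}$ is the common Bergman kernel. Thus both systems live on one rational curve with $(x,y)$ and $(x^\dag,y)$ related as required, and it remains only to identify the transformation connecting them.

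Next I would invoke the main result of~\cite{BDKS4}, which gives a closed formula expressing $d_1\cdots d_n H^{(g)}_{0,n}$ through the $d_1\cdots d_m H^{(h)}_{m,0}$ (and conversely), built by summation over the same class of bipartite multigraphs that enters Definition~\ref{def:OmegabW}. The task is then one of translation: to verify that, after the sign $(-1)^n$ built into $\omega^{\dag,(g)}_n$, this formula coincides term by term with the symplectic duality correspondence~\eqref{eq:bW-to-omegad}--\eqref{eq:U}, including the modified $n=1$ contribution~\eqref{eq:bW-to-omegad-n1}. Concretely, the operator $U$ should reproduce the two ingredients of the hypermap data: the factor $e^{\mp v\psi(\theta)}$ encodes the coordinate change $x\mapsto x^\dag$ via $\psi=\log\phi$, while $e^{\pm v\cS(v\hbar\partial_\theta)\tilde\psi(\theta)}$ encodes the content weight $\prod_{(i,j)\in\lambda}\phi(\hbar(i-j))$. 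This is entirely parallel to the Orlov--Scherbin reduction of Proposition~\ref{prop:OSsympl}, and I would present the argument as the analogous reformulation, matching against the relevant formula of~\cite{BDKS4} much as Proposition~\ref{prop:OSsympl} was matched against~\cite[Proposition~2.3]{BDKS2}.

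The main obstacle, and the genuinely new point compared with the Orlov--Scherbin case, is to pin down the $\hbar$-deformation and to explain why it is $\tilde\psi(\theta)=\frac{1}{\cS(\hbar\theta)}\log\phi(\theta)$, with the \emph{function} $\cS(\hbar\theta)$ in the denominator rather than the \emph{operator} $\frac{1}{\cS(\hbar\partial_\theta)}$ appearing in Proposition~\ref{prop:OSsympl}. The source of the discrepancy is the symmetric scaling of the hypermap model: both time sets enter as $\shin_\lambda(\hbar^{-1}t)\,\shin_\lambda(\hbar^{-1}s)$, whereas the Orlov--Scherbin tau function rescales only the $s$-variables. The extra $\hbar^{-1}$ on the $t$-side rescales each of the derivatives $\partial_{t_{k}}$ (this is what shifts the grading of $H_{n,0}$ from the $\hbar^{2g-2+n}$ normalization of the Orlov--Scherbin family to the $\hbar^{2g-2}$ used here) and, at the level of the weight per cell, is responsible for replacing the operatorial transform $\frac{1}{\cS(\hbar\partial_\theta)}\log\phi$ by the multiplication $\frac{1}{\cS(\hbar\theta)}\log\phi$. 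I would isolate this as the crux of the proof, verify that with this $\tilde\psi$ the operator $U$ of~\eqref{eq:U} reproduces the~\cite{BDKS4} weights, and record the sanity check $\tilde\psi(\theta,0)=\log\phi(\theta)=\psi(\theta)$ demanded by the definition.
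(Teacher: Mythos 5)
The paper itself gives essentially no proof of this theorem: it is presented as a reformulation of the explicit $H^{(g)}_{m,n+1}\leftrightarrow H^{(g')}_{m',n'}$ formulas of~\cite{BDKS4}, with a remark that those have to be composed to get $H^{(g)}_{n,0}$ in terms of $H^{(h)}_{0,m}$. Your first two paragraphs --- checking that $X\,Y\,\phi(\Theta)=1$ gives $x^\dag=x+\psi(y)$ with $\psi=\log\phi$, matching the unstable terms, and then translating the graph sums of~\cite{BDKS4} into the operator $U$ of~\eqref{eq:U} in the same way Proposition~\ref{prop:OSsympl} is matched against~\cite[Proposition~2.3]{BDKS2} --- are exactly this route, fleshed out correctly.

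The problem is the third paragraph, which you single out as the crux. The mechanism you propose does not exist: replacing $\shin_\lambda(t)$ by $\shin_\lambda(\hbar^{-1}t)$ multiplies each application of $\sum_k X^k\partial_{t_k}$ by an overall $\hbar^{-1}$, and this accounts \emph{entirely} for the passage from the $\hbar^{2g-2+n}$ grading of the Orlov--Scherbin family to the $\hbar^{2g-2}$ grading used for $H^{(g)}_{m,n}$; it is a pure regrading of the correlators and cannot touch the per-cell weight. That weight is determined solely by the content product $\prod_{(i,j)\in\lambda}\phi(\hbar(i-j))=\exp\bigl(\sum_{(i,j)\in\lambda}\log\phi(\hbar(i-j))\bigr)$, which is literally the Orlov--Scherbin weight with $\hat\psi=\log\phi$; the relation $\hat\psi=\cS(\hbar\partial_\ta)\tilde\psi$ comes from telescoping the content sum in the operator formalism and always produces the \emph{differential} operator $\frac{1}{\cS(\hbar\partial_\ta)}$, never multiplication by $\frac{1}{\cS(\hbar\ta)}$. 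So the correct deformation here is $\tilde\psi(\ta)=\frac{1}{\cS(\hbar\partial_\ta)}\log\phi(\ta)$, exactly as in Proposition~\ref{prop:OSsympl}; the $\cS(\hbar\ta)$ in the theorem statement should be read as a typo for $\cS(\hbar\partial_\ta)$ (note both reduce to $\psi$ at $\hbar=0$, so the sanity check you record does not distinguish them). By taking the misprint at face value and inventing a justification for it, you have introduced a genuinely false step: with multiplication by $\frac{1}{\cS(\hbar\ta)}$ the operator $U$ would \emph{not} reproduce the weights of~\cite{BDKS4}, and the verification you promise in the last sentence would fail.
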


\begin{remark} In fact, no explicit expression of $H^{(g)}_{n,0}$ via $H^{(h)}_{0,m}$ is given explicitly in~\cite{BDKS4}. In \emph{loc.~cit.} only the formulas expressing $H^{(g)}_{m,n+1}$ through $H^{(g')}_{m',n'}$ with $n'\leq n$ are given. But then one can combine these expressions into a formula expressing $H^{(g)}_{n,0}$ via  $H^{(h)}_{0,m}$.   
\end{remark}

\subsection{Example: \texorpdfstring{$\psi=\log(\ta)$}{psi=log(theta)} symplectic duality as the original formulation of \texorpdfstring{$x-y$}{x-y} duality}

Consider the initial differentials of $x-y$ duality
\begin{equation}
-\omega^{(0)}_1=-y\,dx=xy\;d\log x^{-1},
\qquad \omega^{\vee,(0)}_1=x\,dy=xy\;d\log y.
\end{equation}
We see that these differentials can also be treated as initial differentials for the spectral curves $(\log x^{-1},x\,y)$ and $(\log y,x\,y)$, respectively. Note that $\log y-\log x^{-1}=\psi(x\,y)$ with $\psi(\ta)=\log \ta$. This motivates the following result.

\begin{proposition} The dual differentials $\omega^{\vee,(g)}$ of $x-y$ duality coincide with the symplectic dual of the differentials $(-1)^n\omega^{(g)}_n$ computed with respect to symplectic dual spectral curves $(\log x^{-1},x\,y)$ and $(\log y,x\,y)$ and the $\hbar$-deformation \begin{align}
\tilde\psi(\ta)=\frac{1}{\cS(\hbar\partial_\ta)}\log \ta
	\end{align} of the function $\psi(\ta)=\log \ta$.
\end{proposition}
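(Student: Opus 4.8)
The plan is to show that the symplectic duality transformation, specialized to the spectral curves $(\log x^{-1}, xy)$ and $(\log y, xy)$ with $\psi(\ta)=\log\ta$, produces exactly the same differentials as the $x-y$ duality transformation of Definition~\ref{def:xy-duality}. The central observation is that both transformations are given by explicit operator formulas applied to the extended differentials $\bW_n$, so the proposition should reduce to a direct comparison of these operators. I would begin by carefully identifying the two sets of spectral data: the symplectic-duality ``new'' spectral curve $(x^\dag, y^\dag)$ with $x^\dag = x + \psi(y)$, $y^\dag = y$, must be matched with the $x-y$-dual data $(x^\vee, y^\vee)=(y,x)$. Here the new $y$-variable is $\ta := xy$ (the common second coordinate of both symplectic curves), and the new $x$-variables are $\log x^{-1}$ and $\log y$ respectively; the relation $\log y - \log x^{-1} = \psi(\ta)$ with $\psi = \log$ is precisely the compatibility condition $x^\dag = x + \psi(y)$ in these renamed coordinates.

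The key computational step is to substitute $\psi(\ta)=\log\ta$ and $\tilde\psi(\ta)=\frac{1}{\cS(\hbar\partial_\ta)}\log\ta$ into the operator $U$ of~\eqref{eq:U}. I expect that with this special $\psi$ the factors $e^{-v\psi(\theta)}=\theta^{-v}$ and the $\partial_\theta^r$ shifts combine, after extraction of $[v^j]$ and $[u^r]$ coefficients, into exactly the operator $\sum_{r}(-d\frac{1}{dy})^r[u^r]$ appearing in Definition~\ref{def:xy-duality} once one accounts for the change of variable from $\ta=xy$ back to the original $x,y$. The sign factor $(-1)^n$ that distinguishes $\omega^{\vee,(g)}_n$ from the symplectic dual of $(-1)^n\omega^{(g)}_n$ should emerge naturally from this translation, since the $x-y$ duality formula carries an explicit $(-1)^n$ while the symplectic formula~\eqref{eq:bW-to-omegad} does not. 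The $n=1$ case must be checked separately using~\eqref{eq:bW-to-omegad-n1}, where the extra logarithmic term should be matched against the corresponding correction in the $x-y$ duality; this is where the LogTR structure enters, since $\psi=\log$ genuinely introduces logarithmic singularities.

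The main obstacle I anticipate is reconciling the two different ``deformation'' conventions: in $x-y$ duality the operator is built from $-d\frac{1}{dy}$ acting in $u$-space with no $\hbar$-dressing of $\psi$ beyond what $\bW_n$ already contains, whereas the symplectic operator~\eqref{eq:U} carries the nontrivial quantum correction $e^{v\cS(v\hbar\partial_\theta)\tilde\psi(\theta)}$ with $\tilde\psi = \frac{1}{\cS(\hbar\partial_\ta)}\log\ta$. The precise point is that $\cS(v\hbar\partial_\theta)\tilde\psi(\theta)$ should collapse, for $\psi=\log$, to something that reproduces the plain $\psi(\theta)=\log\theta$ after the operator $\cS(v\hbar\partial_\theta)$ absorbs the $\frac{1}{\cS(\hbar\partial_\ta)}$; verifying this simplification identity at the level of formal power series in $\hbar$ and $v$ is the technical heart of the argument. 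Once this identity is in place, the remaining steps are bookkeeping: tracking the Jacobian factors $dx^\dag/dx$ under the coordinate change, confirming symmetry, and matching the $(0,2)$ normalization. I would therefore structure the proof as first establishing the operator identity, then deducing equality of all $\omega^{\vee,(g)}_n$ with the symplectic duals term by term via the uniqueness of the extended differentials $\bW_n$ noted in Remark~\ref{rmk3.5}.
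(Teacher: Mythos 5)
There is a genuine gap, and it sits exactly where you wave your hands. First, a point of context: the paper does not prove this proposition at all --- it states that the proof is in \cite[Section~3]{ABDKS1}, because the statement is essentially a repackaging of the main theorem of that paper (the equivalence of the original formula of \cite{borot2023functional} with the formula of Definition~\ref{def:xy-duality}). Your plan is to substitute $\psi(\ta)=\log\ta$, $\tilde\psi(\ta)=\frac{1}{\cS(\hbar\partial_\ta)}\log\ta$ into the operator $U$ of~\eqref{eq:U} and expect the quantum correction to ``collapse'' so that only bookkeeping remains. It does not collapse. With this $\tilde\psi$ one gets $v\,\cS(v\hbar\partial_\theta)\tilde\psi(\theta)=\frac{e^{v\hbar\partial_\theta/2}-e^{-v\hbar\partial_\theta/2}}{e^{\hbar\partial_\theta/2}-e^{-\hbar\partial_\theta/2}}\log\theta$, so that for integer $v$ the factor $e^{-v\psi(\theta)}\partial_\theta^r e^{v\cS(v\hbar\partial_\theta)\tilde\psi(\theta)}$ becomes $\theta^{-v}\partial_\theta^r$ applied to a Pochhammer-type product $\prod_j(\theta+j\hbar)$ --- i.e.\ the operator $U$ unwinds precisely to the \emph{old} $x-y$ swap formula of \cite{borot2023functional}, not to the operator $\sum_r\bigl(-d\tfrac{1}{dy}\bigr)^r[u^r]$ of Definition~\ref{def:xy-duality}. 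Proving that these two expressions agree is the hard content of \cite{ABDKS1}; declaring it the ``technical heart'' and then leaving it as an expected simplification means the proof is missing its only nontrivial step.

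A second, smaller issue: the two transformations are not applied to the same object. The $\bW_n$ entering the symplectic duality formula is built from the spectral curve data $(\log x^{-1},\,xy)$ and from the input differentials $(-1)^n\omega^{(g)}_n$, whereas the $\bW_n$ in Definition~\ref{def:xy-duality} is built from $(x,y)$ and $\omega^{(g)}_n$; the exponential prefactor $e^{u(\cS(u\hbar\partial_x)-1)y}$ and the shift operators in~\eqref{eq:bWn} change accordingly. So the appeal to Remark~\ref{rmk3.5} for a ``term by term'' comparison of the two operator expressions acting on ``the'' $\bW_n$ does not go through as stated; you must either re-express one extended differential in terms of the other or compare the full composite formulas, which adds another layer of computation you have not addressed.
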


This proposition is proved in~\cite[Section~3]{ABDKS1}. This treatment of $x-y$ duality appeared in the literature as its first known formula, see~\cite{borot2023functional,BDKS-fullysimple,ABDKS1}. Note that this definition can only be applied if both $x$ and $y$ are meromorphic. Besides, it is not invariant under a change of $x$ or~$y$ by additive constants. Meanwhile, for the correctness of Definition~\ref{def:xy-duality} it is sufficient to assume that $dx$ and $dy$ are meromorphic and the duality transformation is uniquely determined by these differentials.

\section{Symplectic duality as a combination of \texorpdfstring{$x-y$}{x-y} dualities }

\label{sec:SymplDual-via-xy}

Consider the spectral curves $x-y$ dual to the original one given by the pair of functions $(x,y)$ and its symplectic dual one $(x^\dag,y^\dag)=(x+\psi(y),y)$,
\begin{align}\label{eq:checkdef}
(x^\vee,y^\vee)&=(y,x),
& &\omega^{\vee,(0)}_1=x\,dy,
\\ \label{eq:dagcheckdef}
(x^{\dag\vee},y^{\dag\vee})&=(y,x+\psi(y)),
& &\omega^{\dag\vee,(0)}_1=(x+\psi(y))\,dy.
\end{align}
These curves are very close. For example, if $x$ and $y$ are meromorphic and $\tilde\psi=\psi$ is rational, then the topological recursion differentials  produced by these spectral curves are the same for $(g,n)\ne(0,1)$. This provides a hint that the $x-y$ dual differentials of the original system of differentials and its symplectic dual should be related in a more general formal setting. The following theorem claims that it is indeed the case.

\begin{theorem}\label{thm:sympl-from-xy}
Consider a given collection of differentials $\{\omega^{(g)}_n\}$ and its symplectic dual one $\{\omega^{\dag,(g)}_n\}$ computed with respect to a chosen function $\tilde\psi$. Then the $x-y$ dual of these two systems of differentials, $\{\omega^{\vee,(g)}_n\}$, and $\{\omega^{\dag\vee,(g)}_n\}$, respectively, are related to one another by
\begin{equation}\label{eq:omegav-omegadv}
\omega^{\dag\vee,(g)}_n=\omega^{\vee,(g)}_n+\delta_{n,1}\Bigl([\hbar^{2g}]\tilde\psi(y(z),\hbar)\Bigr)\,dy.
\end{equation}

In other words, 
the symplectic dual differentials can be obtained by the following combination of three transformations:
\begin{enumerate}[label=(\alph*)]
\item \label{en:trxy} apply $x-y$ duality to the original system of differentials;
\item \label{en:trShift1} modify $(g,1)$-differentials of the obtained system by the summands prescribed by~\eqref{eq:omegav-omegadv};
\item \label{en:trxyback} apply $x-y$ duality once again.
\end{enumerate}
\end{theorem}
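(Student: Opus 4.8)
The plan is to prove the single key identity~\eqref{eq:omegav-omegadv} directly at the level of the extended differentials $\bW_n$ and then translate it back to the $\omega$'s via Definition~\ref{def:xy-duality} and its inverse~\eqref{omegaveeW}. The three-step reformulation \ref{en:trxy}--\ref{en:trxyback} is then an immediate corollary: applying $x-y$ duality to the original system gives $\{\omega^{\vee,(g)}_n\}$ on the curve $(x^\vee,y^\vee)=(y,x)$; the identity~\eqref{eq:omegav-omegadv} says that modifying the $(g,1)$-differentials by the prescribed shift produces exactly $\{\omega^{\dag\vee,(g)}_n\}$, which lives on the curve $(x^{\dag\vee},y^{\dag\vee})=(y,x+\psi(y))$; and applying $x-y$ duality once more — now swapping $(y,x+\psi(y))\mapsto(x+\psi(y),y)=(x^\dag,y^\dag)$ — recovers $\{\omega^{\dag,(g)}_n\}$ by the involutivity in part~2 of Theorem~\ref{Th:x-y}. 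So everything reduces to establishing~\eqref{eq:omegav-omegadv}.

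First I would compare the two $x-y$ dual systems through their extended differentials. Both $\{\omega^{\vee,(g)}_n\}$ and $\{\omega^{\dag\vee,(g)}_n\}$ are defined from the \emph{same} underlying $\{\omega^{(g)}_n\}$ (resp.\ $\{\omega^{\dag,(g)}_n\}$), but the curves differ only by $x^{\dag\vee}=x^\vee+0$ and $y^{\dag\vee}=y^\vee$ wait — more precisely the relevant difference is in the $x$-coordinate, $x^\dag=x+\psi(y)$, while $y$ is unchanged. The natural object to track is $\bW^\vee_n$ versus $\bW^{\dag\vee}_n$. Using the relation~\eqref{eq:bWtobWvee} expressing $\bW^\vee_n$ through $\bW_n$, and the defining formula~\eqref{eq:U} for the operator $U_i$ implementing symplectic duality, I would show that the symplectic duality operator $U$ is precisely the $x-y$ duality operator written on the shifted curve, conjugated by the shift $x\mapsto x+\psi(y)$. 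The key computation is that the two lines of~\eqref{eq:U} — with $\psi$ and $-\psi$, and with the $\cS$-dressed $\tilde\psi$ — exhibit the operator as the composition ``$x-y$ dualize, shift by $\psi$, undualize,'' matching the conjugation structure of~\eqref{eq:bWtobWvee}--\eqref{eq:bWveetobW}. For $n\ge2$ the shift by $\psi(y)$ only affects the differential $dx^\dag=dx+\psi'(y)\,dy$ through the operators $d\frac1{dx^\dag}$, and the comparison yields $\bW^{\dag\vee}_n=\bW^\vee_n$ with no correction.

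The genuine content, and the main obstacle, is the $n=1$ case, where the $\delta_{n,1}$-correction term appears. Here the definition~\eqref{eq:bW-to-omegad-n1} of $\omega^{\dag,(g)}_1$ carries an extra summand beyond $U_1\bW_1$, and correspondingly the $x-y$ duality formulas~\eqref{eq:bWtobWvee}--\eqref{eq:bWveetobW} acquire the $\delta_{n,1}\frac{dy_1}{v_1\hbar}$ and $\delta_{n,1}\frac{dx_1}{u_1\hbar}$ tails. I expect the hard part to be bookkeeping these unstable terms: one must verify that after carrying $\omega^{\dag,(g)}_1$ through $x-y$ duality, the accumulated discrepancy against $\omega^{\vee,(g)}_1$ collapses to exactly $\bigl([\hbar^{2g}]\tilde\psi(y,\hbar)\bigr)\,dy$. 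Concretely I would extract the $[u_1^r]$-coefficients in~\eqref{eq:U} and~\eqref{eq:bW-to-omegad-n1}, track the operator $e^{v\cS(v\hbar\partial_\theta)\tilde\psi(\theta)}$ acting on the constant/linear-in-$u$ part of $\bW_1$, and isolate the part that is \emph{not} reproduced by the pure $x-y$ duality on the shifted curve. The identity $\tilde\psi(\theta,0)=\psi(\theta)$ together with the $\cS$-dressing should make the genus-zero part of the correction vanish and leave precisely the stated $[\hbar^{2g}]\tilde\psi$ contribution for $g\ge1$. Once this $n=1$ discrepancy is pinned down, the theorem follows.
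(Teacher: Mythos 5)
Your overall architecture --- reduce everything to the single relation \eqref{eq:omegav-omegadv}, work at the level of the extended differentials $\bW_n$, and recover the three-step formulation from the involutivity of $x-y$ duality --- is exactly the paper's strategy. But there is a genuine error at the central step. You claim that for $n\ge2$ ``the comparison yields $\bW^{\dag\vee}_n=\bW^\vee_n$ with no correction.'' This is false. The relation \eqref{eq:omegav-omegadv} concerns the differentials $\omega^{\vee,(g)}_n$ themselves; translated to the extended differentials it reads
\begin{equation*}
\bW^{\dag\vee}_n(z_{\set n},v_{\set n})=\bW^{\vee}_n(z_{\set n},v_{\set n})\prod_{i=1}^n e^{v_i(\cS(v_i\hbar\partial_{y_i})\tilde \psi(y_i)-\psi(y_i))},
\end{equation*}
with a nontrivial multiplicative correction for \emph{every} $n$, not just $n=1$. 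The reason is that $\bW^\vee_n$ is a graph sum over all the $\omega^{\vee,(g')}_{n'}$ with $n'\le n$: the modified $(0,1)$-part enters through the exponential prefactor in \eqref{eq:bWn}, and the modified $(g,1)$-parts with $g\ge1$ enter as $1$-valent multiedges attachable to any vertex, and these exponentiate. If $\bW^{\dag\vee}_n$ really equaled $\bW^\vee_n$ for $n\ge2$, the symplectic dual differentials would not depend on the $\hbar$-deformation $\tilde\psi$ beyond its leading term $\psi$, contradicting the explicit $e^{v\cS(v\hbar\partial_\theta)\tilde\psi(\theta)}$ factor in the operator $U$ of \eqref{eq:U}.

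Second, you have not located where the real work lies. Once the correct $\bW$-level relation above is substituted into the inverse duality formulas \eqref{omegaveeW}--\eqref{eq:bWtobWvee}, one obtains a candidate expression for $\omega^{\dag,(g)}_n$ in which the $r$-fold operator $\bigl(-d\tfrac{1}{dy}\bigr)^r$, conjugated by $e^{vx}$, acts on $[u^r]\bW_n$, whereas the definition \eqref{eq:bW-to-omegad} via $U$ has $\partial_\theta^r$ acting on $e^{v\cS(v\hbar\partial_\theta)\tilde\psi(\theta)}$ before the restriction $\theta\to y$. Reconciling these two is a nontrivial operator identity, needed for all $n$ and for each fixed $r$ separately; the paper proves it by a telescoping integration-by-parts argument based on the vanishing of $\sum_{j\ge0}(-\partial_{x^\dag})^j[v^j](\partial_{x^\dag}+v)A\,B$. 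Your proposal treats $n\ge2$ as requiring no work and concentrates all the difficulty in the $\delta_{n,1}$ tails; in fact those tails are routine bookkeeping, and the identity just described is the heart of the proof. As written, the proposal would not go through.
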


\begin{proof}
In fact, while applying these transformations, there is no need to involve combinatorics of summation over graphs each time. It is much more instructive to follow the action of these transformations on the corresponding extended differentials which, according to Definition \ref{def:OmegabW}, already contain these summations.

 In particular, Equation~\eqref{eq:omegav-omegadv} rewritten in terms of extended differentials reads
\begin{equation}\label{eq:bWv-bWdv}
\bW^{\dag\vee}_n(z_{\set n},v_{\set n})=\bW^{\vee}_n(z_{\set n},v_{\set n})
\prod_{i=1}^n e^{v_i(\cS(v_i\hbar\partial_{y_i})\tilde \psi(y_i)-\psi(y_i))}.
\end{equation}
Substituting to~\eqref{omegaveeW}--\eqref{eq:bWtobWvee} we obtain the following expression for the dual differentials for $n\ge2$ suggested by Theorem~\ref{thm:sympl-from-xy}
\begin{equation}
\omega^{\dag,(g)}_n=[\hbar^{2g-2+n}]\prod_{i=1}^n\Bigl(
\sum_{r,j\ge0}
\bigl(-d_i\tfrac{1}{dx^\dag_i}\bigr)^j
[v_i^j]
e^{v_i(\cS(v_i\hbar\partial_{y_i})\tilde\psi(y_i)-\psi(y_i))}
e^{-v_i x_i}\bigl(-d_i\tfrac{1}{dy_i}\bigr)^r e^{v_i x_i}[u_i^r]
\Bigr)\bW_n,
\end{equation}
and expressions for $\omega^{\dag,(g)}_1$ differentials contain correction which are easy to trace. 

Comparing with the actual definition~\eqref{eq:bW-to-omegad} of the differentials $\omega^{\dag,(g)}_n$ (and its version for the special case $n=1$ given in~\eqref{eq:bW-to-omegad-n1}) we see that Theorem follows from the following operator identity: for any meromorphic differential $F(z,u)$ depending on~$u$ in polynomial way we have
\begin{align}\label{eq:id-xy-sympl}
UF & =\sum_{r,j\ge0}\bigl(-d\tfrac{1}{dx^\dag}\bigr)^j[v^j]
\Bigl(\restr{\theta}{y}e^{-v\psi(\theta)}\partial_\theta^r e^{v\cS(v\hbar\partial_\theta)\tilde\psi(\theta)}\Bigr)[u^r]F(z,u)
\\ \notag & =\sum_{r,j\ge0}
\bigl(-d\tfrac{1}{dx^\dag}\bigr)^j
[v^j]
e^{v(\cS(v\hbar\partial_{y})\tilde\psi(y)-\psi(y))}
e^{-v x}\bigl(-d\tfrac{1}{dy}\bigr)^re^{v x}[u^r]F(z,u).
\end{align}
We claim that not only the whole expressions on both sides but also their summands for each particular $r$ agree. Namely, denote
\begin{equation}\label{eq:ABdef}
A(z,v)=e^{-v\psi(y)}[u^r]\frac{F(z,u)}{dy},
\qquad
B(z,v)=e^{v\cS(v\hbar\partial_y)\tilde\psi(y)}.
\end{equation}
With this notation and using equalities $x+\psi(y)=x^\dag$, $\bigl(d\tfrac{1}{dx}\bigr)^j=dx\, \partial_x^{j}\frac{1}{dx}$, we can write the equality of the $r$th terms in~\eqref{eq:id-xy-sympl} that we are going to prove in the following form
\begin{equation}\label{eq:id-xy-sympl2}
\sum_{j=0}^\infty(-\partial_{x^\dag})^j\;\frac{dy}{dx^\dag}[v^j]A\partial_y^r B=
\sum_{j=0}^\infty(-\partial_{x^\dag})^j\;\frac{dy}{dx^\dag}[v^j]B e^{-vx^\dag}(-\partial_y)^r e^{vx^\dag}A.
\end{equation}

\begin{lemma}\label{lem:ABlemma}
Identity~\eqref{eq:id-xy-sympl2} holds for any $r\geq 0$ and for any functions $x^\dag(z)$, $y(z)$, $A(z,v)$, $B(z,v)$, where~$A$ and~$B$ considered as functions in~$v$ are power series such that the products $A\,B$ and $A\partial_z B$ are polynomial in $v$.
\end{lemma}

\begin{proof}
Indeed, using that
\begin{multline}
0=\sum_{j=0}^\infty(-\partial_{x^\dag})^j[v^j](\partial_{x^\dag}+v)A\,B
=\sum_{j=0}^\infty(-\partial_{x^\dag})^j[v^j]e^{-vx^\dag}\,\partial_{x^\dag}\,e^{vx^\dag}A\,B
\\=\sum_{j=0}^\infty(-\partial_{x^\dag})^j\frac{dy}{dx^{\dag}}[v^j]e^{-vx^\dag}\,\partial_y\,e^{vx^\dag}A\,B
=\sum_{j=0}^\infty(-\partial_{x^\dag})^j\frac{dy}{dx^{\dag}}[v^j](A\,\partial_y B+e^{-vx^{\dag}}B\,\partial_y(e^{vx^{\dag}}A))
\end{multline}
we conclude that
\begin{equation}
\sum_{j=0}^\infty(-\partial_{x^\dag})^j\frac{dy}{dx^{\dag}}[v^j]A\,\partial_y B
=\sum_{j=0}^\infty(-\partial_{x^\dag})^j\frac{dy}{dx^{\dag}}[v^j]e^{-vx^{\dag}}B\,(-\partial_y)(e^{vx^{\dag}}A).
\end{equation}
Applying this relation $r$ times we obtain the equality of Lemma.
\end{proof}

Note that $A$ and $B$ of~\eqref{eq:ABdef} satisfy the conditions of Lemma~\ref{lem:ABlemma} \emph{order by order in $\hbar$}. Thus, we can apply this lemma (order by order in $\hbar$), which proves
~\eqref{eq:id-xy-sympl} and hence Theorem~\ref{thm:sympl-from-xy}.
\end{proof}

As an immediate corollary of Theorem~\ref{thm:sympl-from-xy} we conclude that the symplectic duality transformation shares most of the properties valid for $x-y$ duality, see Theorem \ref{Th:x-y}.

\begin{theorem}
1. The symplectic dual differentials $\omega^{\dag,(g)}_n$ are meromorphic, symmetric, and possess the property that $\omega^{\dag,(g)}_n-\delta_{(g,n),(0,2)}\frac{dx^\dag_1dx^\dag_2}{(x^\dag_1-x^\dag_2)^2}$ are regular on the diagonals.

2. This correspondence is a duality indeed: the original differentials $\omega^{(g)}_n$ can be recovered from the symplectic dual ones by a dual formula that in the case $n\ge 2$ reads
\begin{align}
\omega^{(g)}_n(z_{\set n})
=&[\hbar^{2g-2+n}]U^\dag_1\dots U^\dag_n \bW^\dag_n(z_{\set n},w_{\set n}),
\end{align}
where
\begin{align}
U^\dag F&=
\sum_{j,r\ge0}\bigl(-d\tfrac{1}{dx}\bigr)^j[v^j]
\Bigl(\restr{\theta}{y}e^{v\psi(\theta)}\partial_\theta^r e^{-v\cS(v\hbar\partial_\theta)\tilde\psi(\theta)}\Bigr)[w^r]F(z,w),
\end{align}
and for $n=1$ we have
\begin{align}
	\omega^{\dag,(g)}_1(z_1)&= 
	[\hbar^{2g-1}] U^\dag_1 \bW^\dag_1(z_1,u_1) + [\hbar^{2g}]\sum_{j=1}^\infty \bigl(d_1\tfrac{1}{dx_1}\bigr)^{j-1} [v^j]
	\left(
	e^{v(\cS(v\hbar\partial_{y_1})\tilde \psi(y_1)-\psi(y_1))}
	d_1y_1
	\right).
\end{align}

3. Moreover, the extended dual differentials  $\bW_n$ and $\bW^\dag_n$ can be obtained from one another directly by the following explicit relations that avoid combinatorics of summation over graphs
\begin{align}\label{eq:bWtobWdag}
\bW^{\dag}_n(z_{\set n},w_{\set n})
&=\tilde U_1\dots \tilde U_n\bW(z_{\set n},u_{\set n}) + \delta_{n,1} \frac{dx^\dag_1}{\hbar w_1},
\\\tilde U F&=e^{-w y}\sum_{j,r\ge0}\bigl(-d\tfrac{1}{dx^\dag}\bigr)^j[v^j]
\Bigl(\restr{\theta}{y}e^{-v\psi(\theta)}\partial_\theta^r e^{v\cS(v\hbar\partial_\theta)\tilde\psi(\theta)+w\theta}\Bigr)[u^r]F(z,u),
\\\label{eq:bWdadtobW}
\bW_n(z_{\set n},u_{\set n})
&=\tilde U_1^\dag\dots\tilde U_n^\dag\bW(z_{\set n},w_{\set n}) + \delta_{n,1} \frac{dx_1}{\hbar u_1},
\\\tilde U^\dag F&=
e^{-u y}\sum_{j,r\ge0}\bigl(-d\tfrac{1}{dx}\bigr)^j[v^j]
\Bigl(\restr{\theta}{y}e^{v\psi(\theta)}\partial_\theta^r e^{-v\cS(v\hbar\partial_\theta)\tilde\psi(\theta)+u y}\Bigr)[w^r]F(z,w).
\end{align}

The differentials $\Omega_n$ and $\Omega^\dag_n$ are related to each other by integral transforms, defined by an asymptotic expansion around Gaussian integrals. For each of $2n$ variables a transform is given by a double integral (instead of single one in the $x-y$ swap case).

4. The symplectic duality preserves KP integrability. Namely, the system of differentials $\{\omega^{(g)}_n\}$ possesses the KP integrability property if and only if the differentials $\{\omega^{\dag,(g)}_n\}$ possesses the KP integrability property. Equivalently, the determinantal expressions for the differentials hold on one side of symplectic duality if and only if they hold on the other side of duality.

5. 
Assume that $\psi$ is unramified and regular at the critical values of~$y$. Let $p$ be a zero of $dx$ which is not a zero of $dy$. Then all differentials $\omega^{\dag,(g)}_n$, $(g,n)\ne(0,1)$, are regular at $p$ if and only if the differentials $\omega^{(g)}_n$ satisfy the loop equations at $p$.

Vice versa, let $p$ be a zero of $dy$ which is not a zero of $dx$. Then all differentials $\omega^{(g)}_n$, $(g,n)\ne(0,1)$, are regular at $p$ if and only if the differentials $\omega^{\dag,(g)}_n$ satisfy loop equations at $p$.

%
\end{theorem}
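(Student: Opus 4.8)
The plan is to deduce all five items from Theorem~\ref{thm:sympl-from-xy}, which presents symplectic duality as a composition of three elementary transformations: an $x-y$ duality $\{\omega\}\mapsto\{\omega^\vee\}$, the shift \eqref{eq:omegav-omegadv} of the $(g,1)$-differentials, and a second $x-y$ duality $\{\omega^{\dag\vee}\}\mapsto\{\omega^\dag\}$. Every property listed in Theorem~\ref{Th:x-y} that is stable under $x-y$ duality is then inherited by symplectic duality, provided the intermediate shift step is shown to be compatible with that property. Since by \eqref{eq:bWv-bWdv} the shift acts on the extended differentials by a purely multiplicative, variable-by-variable factor $\prod_i e^{v_i(\cS(v_i\hbar\partial_{y_i})\tilde\psi(y_i)-\psi(y_i))}$ that touches only the $n=1$ data, this compatibility is in each case a short local computation. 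I would therefore proceed item by item, carrying the relevant $x-y$ statement through the two duality steps and checking the shift in between.

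For the first item, meromorphy, symmetry, and the prescribed behavior on the diagonals follow from the first item of Theorem~\ref{Th:x-y} applied to each of the two $x-y$ steps; the shift only adds the meromorphic form $\delta_{n,1}([\hbar^{2g}]\tilde\psi(y))\,dy$, which is symmetric trivially and does not meet the $(0,2)$ diagonal, while the second $x-y$ duality automatically produces the $x^\dag$-normalized singularity $\frac{dx^\dag_1dx^\dag_2}{(x^\dag_1-x^\dag_2)^2}$. The second item is immediate from the involutivity of $x-y$ duality (second item of Theorem~\ref{Th:x-y}) together with the fact that the shift is inverted by replacing $(\psi,\tilde\psi)$ with $(-\psi,-\tilde\psi)$, consistently with the identity $x=x^\dag+\psi^\dag(y)$; the operators $U^\dag$ and $\tilde U^\dag$ are then read off as the $\psi\mapsto-\psi$, $\tilde\psi\mapsto-\tilde\psi$, $x^\dag\mapsto x$ versions of \eqref{eq:U}. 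The third item is obtained by literally composing the three operator identities \eqref{eq:bWtobWvee}, \eqref{eq:bWv-bWdv}, \eqref{eq:bWveetobW}, collecting the per-variable operators into the single operator $\tilde U$ and bookkeeping the $\delta_{n,1}$ corrections; the statement on integral transforms follows because the composition of two single-integral Gaussian transforms of $x-y$ duality, with the multiplicative shift absorbed into the Gaussian weight, is a double-integral transform in each of the $2n$ variables.

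For the fourth item I would use that $x-y$ duality preserves KP integrability (fifth item of Theorem~\ref{Th:x-y}), so that it suffices to check the shift does. Phrased through the kernel $\bK^\vee=\Omega^\vee_1$, the shift \eqref{eq:bWv-bWdv} amounts to a per-point conjugation of $\Omega^\vee_1(z^+,z^-)$, and such a conjugation is compatible with the $n$-cycle determinantal presentation \eqref{eq:Omega-vee-Kdet}: along each cyclic product the conjugating factors reassemble into one overall symmetric prefactor, so the determinantal formula survives. Hence KP integrability passes through all three steps.

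The fifth item is where the real work lies, and it is where the hypotheses enter. The assumption that $\psi$ is unramified guarantees that $dx^\dag=dx+\psi'(y)\,dy$ does not vanish at a zero $p$ of $dx$ with $dy(p)\neq0$ (and symmetrically), so that such a $p$ is an unramified point on the $x^\dag$-side; the assumption that $\psi$ is regular at the critical values of $y$ guarantees that $\tilde\psi(y)\,dy$ is holomorphic at the zeros of $dy$ and, being pulled back from the $y$-line, is invariant under the local deck transformation $\sigma_y$ of $y$. With these in hand, at a zero $p$ of $dx$ the chain is clean: the first $x-y$ duality converts the loop equations of $\{\omega\}$ into regularity of $\{\omega^\vee\}$ at $p$ (sixth item of Theorem~\ref{Th:x-y}), the shift preserves this regularity since $\tilde\psi(y)\,dy$ is regular at $p$, and the second $x-y$ duality preserves it because $dx^\dag(p)\neq0$ creates no new poles there. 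I expect the genuine obstacle to be the reverse statement at a zero $p$ of $dy$, where loop equations must be read with respect to $\sigma_y$ rather than with respect to the deck transformation of an $x$-function: one must verify that shifting the $(g,1)$-differentials by the $\sigma_y$-invariant form $\tilde\psi(y)\,dy$ interacts correctly with the \emph{quadratic} loop equations, whose cross terms mix the shift with the unshifted $\omega_1$, and then that the second $x-y$ duality matches loop equations of $\{\omega^{\dag\vee}\}$ at the ramification of $y$ with loop equations of $\{\omega^\dag\}$ at the same point; this last matching is a statement about the common involution $\sigma_y$ that lies slightly beyond the regularity-versus-loop-equation dictionary recorded in Theorem~\ref{Th:x-y}. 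Establishing this compatibility cleanly, and confirming that it is precisely regularity of $\psi$ at the critical values of $y$ that makes it hold, is the crux of the argument.
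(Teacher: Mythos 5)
Your proposal follows essentially the same route as the paper, which dispatches this theorem in a single sentence as a direct corollary of Theorem~\ref{thm:sympl-from-xy} combined with the corresponding items of Theorem~\ref{Th:x-y}; your item-by-item elaboration (carrying each property through the two $x$--$y$ steps and checking the multiplicative shift \eqref{eq:bWv-bWdv} in between) is exactly the intended argument, and your flagging of the deck-transformation subtlety at zeros of $dy$ in item~5 is a fair observation about a point the paper leaves implicit.
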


All these properties are direct corollaries of Theorem~\ref{thm:sympl-from-xy} and the corresponding properties of $x-y$ duality. Part of these properties was already observed though stated a bit differently in~\cite{ABDKS2}. A slightly more delicate question on the compatibility with the projection property and its logarithmic version, and hence with topological recursion or logarithmic topological recursion will be discussed in the next section.

We finish this section with the following observation.

\begin{proposition}
The symplectic duality possesses the following group property: the composition of two symplectic duality transformations computed with respect to the given functions $\tilde\psi_1$ and $\tilde\psi_2$, respectively, is equivalent to symplectic duality transformation computed with respect to the function $\tilde\psi_1+\tilde\psi_2$.
\end{proposition}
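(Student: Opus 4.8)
The plan is to deduce the group property directly from the main Theorem~\ref{thm:sympl-from-xy}, which expresses a single symplectic duality as the sandwich $x$-$y$ duality, then a shift of the $(g,1)$-differentials by $\bigl([\hbar^{2g}]\tilde\psi(y,\hbar)\bigr)dy$, then $x$-$y$ duality again. The key observation is that the two outer $x$-$y$ dualities are involutive, so when we compose two symplectic dualities the inner $x$-$y$ duality closing the first one and the inner $x$-$y$ duality opening the second one cancel against each other. What survives is a single $x$-$y$ duality, followed by the \emph{sum} of the two shifts, followed by one final $x$-$y$ duality, and this is exactly the recipe of Theorem~\ref{thm:sympl-from-xy} for the symplectic duality associated with $\tilde\psi_1+\tilde\psi_2$.

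First I would make the geometry precise. Starting from $(x,y)$, the first symplectic duality with $\tilde\psi_1$ lands on the curve $(x+\psi_1(y),y)$; the second, with $\tilde\psi_2$, sends this to $(x+\psi_1(y)+\psi_2(y),y)$. Since $\psi_i(\ta)=\tilde\psi_i(\ta,0)$, the composite target curve is $(x+(\psi_1+\psi_2)(y),y)$, matching the $\tilde\psi_1+\tilde\psi_2$ duality as required. Next I would track the differentials through the sandwich description. It is cleanest to argue at the level of the extended differentials $\bW_n$, where the shift step~\ref{en:trShift1} is encoded multiplicatively by~\eqref{eq:bWv-bWdv}. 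The cleanest formulation would be to prove the multiplicative identity
\begin{equation}
\prod_{i=1}^n e^{v_i(\cS(v_i\hbar\partial_{y_i})\tilde\psi_2(y_i)-\psi_2(y_i))}\,
\prod_{i=1}^n e^{v_i(\cS(v_i\hbar\partial_{y_i})\tilde\psi_1(y_i)-\psi_1(y_i))}
=\prod_{i=1}^n e^{v_i(\cS(v_i\hbar\partial_{y_i})(\tilde\psi_1+\tilde\psi_2)(y_i)-(\psi_1+\psi_2)(y_i))},
\end{equation}
which is immediate since the operators $\cS(v_i\hbar\partial_{y_i})$ are additive in $\tilde\psi$ and the variable $v_i$ and the point $y_i$ are the same for both factors. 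This expresses the additivity of the shifts once the two intervening $x$-$y$ dualities are cancelled.

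The main obstacle, and the step requiring genuine care, is justifying that the two inner $x$-$y$ dualities really cancel. After the first symplectic duality we are on the curve $(x+\psi_1(y),y)$ with its differentials; the sandwich presentation of the second symplectic duality begins by applying $x$-$y$ duality to \emph{that} system. But the sandwich presentation of the first symplectic duality \emph{ended} with an $x$-$y$ duality producing exactly those differentials from the shifted $\omega^{\vee}$-system on $(y,x+\psi_1(y))$. By part~2 of Theorem~\ref{Th:x-y}, $x$-$y$ duality is an involution, so applying it twice in succession returns the shifted $\omega^{\vee}$-differentials unchanged, and the two shift steps then simply add in the common $(y,x)$-dual frame via the identity above. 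I would emphasize that the shift in step~\ref{en:trShift1} only touches the $(g,1)$-differentials by adding a term proportional to $dy$, and that this term lives on the dual side where $y$ plays the role of the $x$-coordinate; hence it is inert under the rest of the construction and the two shifts genuinely superpose. The only bookkeeping subtlety is the $\delta_{n,1}$ correction terms in~\eqref{eq:bWtobWvee}--\eqref{eq:bWveetobW}, which I would check cancel consistently so that the composite reproduces precisely the $n=1$ formula~\eqref{eq:bW-to-omegad-n1} for $\tilde\psi_1+\tilde\psi_2$; this is routine but must be verified because $n=1$ is where all the inhomogeneous terms are concentrated.
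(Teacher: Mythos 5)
Your proposal is correct and follows essentially the same route as the paper's own (very brief) proof: conjugating by the $x$-$y$ duality, which is an involution, reduces each symplectic duality to an additive shift of the $(g,1)$-differentials by $\bigl([\hbar^{2g}]\tilde\psi(y,\hbar)\bigr)dy$, and such shifts obviously compose additively. Your extra bookkeeping at the level of $\bW_n$ and of the $\delta_{n,1}$ corrections is a more detailed writing-out of the same argument, not a different one.
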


\begin{proof}
Indeed, after conjugation with $x-y$ duality the transformation of symplectic duality results in a change of the $1$-point differentials by a summand proportional to $\tilde\psi(y)$. Such transformation obviously satisfies the group property.
\end{proof}

\section{Compatibility with topological recursion} \label{sec:CompatibilityTR}

\subsection{Topological recursion}
The following statement is a direct corollary of Theorem~\ref{thm:sympl-from-xy}.

\begin{theorem}\label{thm:TRsymplectic}
Assume that $x$ and $y$ are meromorphic and $\psi$ is rational and regular at the critical values of $y$. Let $x^\dag=x+\psi(y)$ and assume that all zeros of $dx$, $dy$, and $dx^\dagger$ are simple, and that the zeros of $dx$ are disjoint from the zeros of $dy$. 

Let $\tilde \psi=\psi$. Then the system of differentials $\{\omega^{(g)}_n\}$ solves TR for the spectral curve $(\Sigma,x,y)$ if and only if its symplectic dual system of differentials $\{\omega^{\dag,(g)}_n\}$ solves TR for the spectral curve $(\Sigma,x^\dag,y)$.
\end{theorem}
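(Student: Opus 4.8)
The plan is to exploit the decomposition of symplectic duality furnished by Theorem~\ref{thm:sympl-from-xy} together with the fact that $x$-$y$ duality preserves (log)TR (Theorem~\ref{Th:x-y}, item~\ref{en:xydualTR}). Concretely, symplectic duality factors as: (a) $x$-$y$ duality sending $\{\omega^{(g)}_n\}$ on $(x,y)$ to $\{\omega^{\vee,(g)}_n\}$ on $(y,x)$; (b) the shift of $(g,1)$-differentials prescribed by~\eqref{eq:omegav-omegadv}, producing $\{\omega^{\dag\vee,(g)}_n\}$ on $(y,x^\dag)$; (c) a second $x$-$y$ duality to $\{\omega^{\dag,(g)}_n\}$ on $(x^\dag,y)$. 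Steps (a) and (c) preserve TR by item~\ref{en:xydualTR}, so the theorem reduces to proving that step (b) preserves TR.

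For step (b), since $\tilde\psi=\psi$ is rational (hence $\hbar$-independent), $[\hbar^{2g}]\tilde\psi(y)=\delta_{g,0}\psi(y)$, so the shift in~\eqref{eq:omegav-omegadv} only touches the $(0,1)$-differential: $\omega^{\dag\vee,(g)}_n=\omega^{\vee,(g)}_n$ for all $(g,n)\neq(0,1)$, while $\omega^{\dag\vee,(0)}_1-\omega^{\vee,(0)}_1=\psi(y)\,dy$. Both curves $(y,x)$ and $(y,x^\dag)$ have $y$ as their $x$-function, hence the same zero locus of $dy$, the same local deck transformations $\sigma_i$ (with $y\circ\sigma_i=y$), and the same Bergman kernel. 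Accordingly I would first observe that the projection property of Definition~\ref{def:PP} constrains only the $\omega^{(g)}_n$ with $2g-2+n>0$, which are identical for the two systems, and is formulated through the same zeros of $dy$ and the same $B$; therefore it holds for one system if and only if it holds for the other. It remains to compare the loop equations at a zero $p_i$ of $dy$.

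Then I would compare the two sets of loop equations. Only the $(0,0)$ linear loop equation~\eqref{eq:LLE-original} involves $\omega^{(0)}_1$; its difference between the two curves is $\psi(y(z))\,dy(z)+\sigma_i^*(\psi(y)\,dy)=2\psi(y)\,dy$ near $p_i$, which is holomorphic with a simple zero because $\psi$ is regular at the critical value $y(p_i)$ and $dy$ has a simple zero at $p_i$; thus the linear loop equations are equivalent for the two systems. For the quadratic loop equation~\eqref{eq:QLE-original}, the only $\omega^{(0)}_1$-dependent terms are $\omega^{(0)}_1(z)\,\omega^{(g)}_{m+1}(z_{\set m},\sigma_i z)+\omega^{(g)}_{m+1}(z_{\set m},z)\,\omega^{(0)}_1(\sigma_i z)$, so the difference between the two curves is $\psi(y)\,dy(z)\cdot\bigl(\omega^{(g)}_{m+1}(z_{\set m},z)+\omega^{(g)}_{m+1}(z_{\set m},\sigma_i z)\bigr)$, using $\sigma_i^*(\psi(y)\,dy)=\psi(y)\,dy$. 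The second factor is exactly the linear loop expression, which (granted the linear loop equations, whose equivalence we have just established) has a simple zero at $p_i$; multiplied by $\psi(y)\,dy$, which has a simple zero there, the difference acquires a double zero and hence satisfies the quadratic loop condition identically. Since ``holomorphic with a double zero at $p_i$'' defines a linear subspace, the quadratic loop equations hold for one system if and only if for the other. This establishes that step (b) preserves blobbed TR and the projection property, i.e.\ TR.

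Finally I would verify the hypotheses needed to invoke item~\ref{en:xydualTR} at steps (a) and (c). For (a) this is immediate from the assumptions: $x,y$ are meromorphic and the zeros of $dx,dy$ are simple and disjoint. For (c) one needs $x^\dag=x+\psi(y)$ meromorphic (clear, since $x,y$ are meromorphic and $\psi$ rational) and the zeros of $dx^\dag$ and $dy$ simple and disjoint. Simplicity is assumed; disjointness is the one point requiring an argument, but it follows because $dx^\dag=dx+\psi'(y)\,dy$ coincides with $dx$ at every zero of $dy$, and the latter is nonzero there since the zeros of $dx$ and $dy$ are disjoint. I expect the main obstacle to be precisely the loop-equation bookkeeping of step (b)---in particular the factorization of the quadratic-loop difference through the linear loop expression and the accompanying use of the regularity of $\psi$ at the critical values of $y$---while the rest is a formal assembly of Theorem~\ref{thm:sympl-from-xy} and item~\ref{en:xydualTR}.
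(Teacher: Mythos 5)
Your proof is correct and follows essentially the same route as the paper's: decompose symplectic duality via Theorem~\ref{thm:sympl-from-xy}, invoke Theorem~\ref{Th:x-y}, part~\ref{en:xydualTR}, for steps (a) and (c) (including the same observation that $dx^\dag=dx$ at zeros of $dy$), and check that the $(0,1)$-shift preserves the loop equations and projection property using the $\sigma_i$-invariance of $\psi(y)\,dy$. The only cosmetic difference is that you verify the quadratic loop equations directly from~\eqref{eq:QLE-original} by factoring the difference through the linear loop expression, whereas the paper uses the $\Delta$-reformulation of Remark~\ref{rem:QLEDelta}; both arguments rest on the same invariance.
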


\begin{proof}
The notation in this proof follows the notation of Theorem~\ref{thm:sympl-from-xy} (in particular, see \eqref{eq:checkdef} and \eqref{eq:dagcheckdef}).
%
Note that transformation~\ref{en:trxy} of Theorem~\ref{thm:sympl-from-xy} preserves topological recursion by Theorem~\ref{Th:x-y} (part~\ref{en:xydualTR}).

Let us see that after the application of transformation~\ref{en:trShift1} of Theorem~\ref{thm:sympl-from-xy} TR still holds (this is a very well-known fact in the theory of topological recursion, see~\cite{EO}; still we discuss its proof here as it will be useful for our discussion of the case of LogTR below).

Keeping in mind that $\tilde{\psi}=\psi$, formula \eqref{eq:omegav-omegadv} says that $\omega^{\dagger,\vee,(g)}_n$ differs from $\omega^{\vee,(g)}_n$ only in $(g,n)=(0,1)$ term.
The linear loop equations still hold (that is, they hold for $\omega^{\dagger,\vee,(g)}_n$) due the fact that $\psi(\ta)$ is regular at the critical values of $x^\vee$ and thus $\psi(x^\vee(z))dx^\vee$ is regular and vanishing at the critical points of $x^\vee$.
Regarding the quadratic loop equations, let us use the form from Remark~\ref{rem:QLEDelta}, as the linear loop equations are already established.
As the change in \eqref{eq:omegav-omegadv} affects only the $n=1$ terms then only the second summand in the quadratic loop equation \eqref{eq:QLE-newform} is affected. Note that we then have either $\Delta_{i,z}$ acting on $\psi(x^\vee(z))dx^\vee$ or $\Delta_{i,w}$ acting on $\psi(x^\vee(w))dx^\vee$, but $x^\vee$ is by definition symmetric under the deck transformations and thus $\Delta$ kills $\psi(x^\vee)dx^\vee$. This means that the quadratic loop equations are preserved.
The projection property is imposed only upon the $2g-2+n>0$ differentials and thus it is preserved as well, as only $\omega^{(0)}_1$ gets changed.

Thus, after transformation~\ref{en:trShift1} TR still holds, and after transformation~\ref{en:trxyback} it holds as well again due to Theorem~\ref{Th:x-y} (part~\ref{en:xydualTR}). In the latter statement we have used the fact that the zeros of $dx^\dag$ are disjoint from the zeros of $dy$ since at the zeros of $dy$ we have $dx=dx^\dagger$ due to the condition that $\psi$ is regular at the critical values of $y$.
\end{proof}

\begin{remark}
	Let us stress that the above theorem works for an arbitrary compact Riemann surface $\Sigma$ of any genus.
\end{remark}

\subsection{Logarithmic topological recursion}

We would like to extend the result of Theorem~\ref{thm:TRsymplectic} to the situation when $x$ and $y$ may possess logarithmic singularities, that is, $dx$ and $dy$ are meromorphic but might have non-vanishing residues. Then the condition that the differential of $x^\dag=x+\psi(y)$ is also meromorphic imposes certain conditions on $x$, $y$, and $\psi$. 

In the presence of logarithmic singularities, it is more natural to consider not the traditional topological recursion but \emph{logarithmic topological recursion} defined in Section~\ref{sec:LogTRDef}, see also~\cite{ABDKS4}. The LogTR differentials comparing with TR differentials may posses a small number of extra poles with controlled principal parts. One of the main observations of~\cite{ABDKS4} is that the $x-y$ duality preserves LogTR rather than just TR. 

Due to the fact that $y$ is multivalued when it has logarithmic singularities, it is tricky to formulate and prove a general theorem in the vein of Theorem~\ref{thm:TRsymplectic} in this case, since we need to deal with the composition $\psi(y(z))$. So instead we are going to formulate a \emph{principle} for which we aim in this case.
\begin{principle}\label{pr:sympLogTR}
	Assume that $dx$ and $dy$ are meromorphic. 
	  Let $\psi$ be such that the composition $\psi(y(z))$ makes sense and $d\psi(y(z))$ is univalued and meromorphic. Let $x^\dag=x+\psi(y)$ and assume that all zeros of $dx$, $dy$, and $dx^\dagger$ are simple, and that the zeros of $dx$ are disjoint from the zeros of $dy$, and $\psi$ is regular at the critical values of $y$. 
	  
	
	Let $\tilde \psi$ be the 
	$\hbar$-deformation of $\psi$ described below. Then the system of differentials $\{\omega^{(g)}_n\}$ solves LogTR for the spectral curve $(x,y)$ if and only if its symplectic dual system of differentials $\{\omega^{\dag,(g)}_n\}$ solves LogTR for the spectral curve $(x^\dag,y)$.
\end{principle}
We do not prove the statement of this principle in general, but we prove it below for three large families of $(x,y,\psi)$ 
for which the conditions are satisfied.

\begin{remark}
	All of the three families for which we prove precise statements realizing Principle~\ref{pr:sympLogTR} below correspond to genus-zero spectral curves, i.e. to $\Sigma=\mathbb{C}\mathrm{P}^1$. The principle \emph{should} work for higher genus spectral curves as well (in particular, when $x$ and/or $y$ become multivalued not due to the presence of logarithmic singularities, but because of monodromy around non-contractible cycles, while $dx$ and $dy$ are still meromorphic), for some suitable restrictions on $x$, $y$, and $\psi$, but studying this lies outside the scope of the present paper; we hope that this will be covered in some future works.
\end{remark}

\subsubsection{Construction of \texorpdfstring{$\tilde \psi$}{psi}}\label{sec:tildepsi}

Let us assume  that $\Sigma=\mathbb{C}\mathrm{P}^1$.

The construction for $\tilde \psi$ goes as follows. Let $\{a_1,\dots,a_M\}$ be the set of all simple poles of $d\psi(y(z))$ which are not simultaneously poles of $dy(z)$. Then $[\hbar^{2g}]\tilde \psi(y(z),\hbar)$, for $g\geq 1$, is the unique function (here $\tilde \psi(y(z),\hbar)$ is treated not as a composition but just as some function depending on $z$ and $\hbar$) such that  $[\hbar^{2g}]\tilde \psi(y(z),\hbar)dy(z)$ does not have any poles apart from $a_1,\dots,a_M$, and its principal part at $a_i$, for $i=1,\dots,M$, coincides with the principal part of 	
	\begin{align}\label{eq:tildepsipp}
		[\hbar^{2g}]\left( \frac{1}{\alpha_i\cS(\alpha_i\hbar \partial_{y})}\log(z-a_i)\right)dy,
	\end{align}
where $\alpha_i^{-1}$ is the residue of $d\psi(y(z))$ at $a_i$. It is not clear how to pass from $\tilde \psi(y(z),\hbar)$ to $\tilde \psi(\ta,\hbar)$ in general, but in the cases described below it is clear. In particular, if $M=0$ then $\tilde{\psi}(\ta,\hbar)=\psi(\ta)$.


\subsubsection{Principle at work} \label{sec:principleAtWork}

We discuss below 
three cases (families) of\\ $(x,y,\,\psi)$ satisfying the conditions of Principle~\ref{pr:sympLogTR}. We do not claim that this list is exhaustive, but the respective theorems, in particular, imply that TR holds for (generating functions of) all generalized double Hurwitz numbers appearing in the literature, as we will see below. 
In all cases we assume that $\Sigma=\mathbb{C}\mathrm{P}^1$.
	\begin{enumerate}[label=Case \Roman*.,itemsep=5pt,topsep=8pt]
		\item \label{en:caseI}
Let $x$ be such that $dx$ is meromorphic, and 
\begin{align}\label{eq:yCaseI}
	y&=R(z),\\
	\psi&=T(\ta)+\sum\limits_{i}\beta^{-1}_i\log(\ta-b_i),
\end{align}
with the following assumptions:
\begin{itemize}
	\item  $R$ and $T$ are rational functions, 
	and $\beta_i$ and $b_i$ are some constants, such that $y(z)$ and $\psi(\theta)$ are not constant functions;
	\item  for all $i$ expressions $R(z)-b_i$ have only simple zeros;
	\item all $b_i$ are distinct;
	\item 	$\psi(\ta)$ is regular at the critical values of $y$;
	\item  all of the poles of $d\psi(y(z))$ which are not simultaneously poles of $dy(z)$ are distinct from the poles of $dx(z)$;
	\item $T(\theta)$ is regular at the points $\theta=b_i$.
\end{itemize}
	
	\item \label{en:caseII} 
Let $x$ be such that $dx$ is meromorphic, and 
\begin{align} \label{eq:yCaseII}
	y&=R(z)+\sum\limits_{i}\alpha_i^{-1}\log(z-a_i),\\
	\psi&=\varkappa_1 \ta + \varkappa_0,
\end{align}
with the following assumptions:
\begin{itemize}
	\item  $R$ is a rational function 
	and $\alpha_i$, $a_i$, and $\varkappa_i$ are some constants, such that $y(z)$ and $\psi(\theta)$ are not constant functions;
	\item all $a_i$ are distinct;
	\item $R(z)$ is regular at the points $z = a_i$.
\end{itemize}
	\item \label{en:caseIII}	
Let $x$ be such that $dx$ is meromorphic, and 
\begin{align}\label{eq:yCaseIII}
	y&=\sum\limits_{i}\alpha_i^{-1}\log(z-a_i),\\
	\psi&=T(e^{\gamma_1 \ta},\dots,e^{\gamma_K \ta})+\varkappa\, \ta,
\end{align}
with the following assumptions:
\begin{itemize}
\item  $T$ is a rational function in $K$ variables and $\varkappa$, $\alpha_i$, $a_i$, and $\gamma_i$ are some constants such that $y(z)$ and $\psi(\theta)$ are not constant functions;
\item all $a_i$ are distinct;
\item $\psi(\ta)$ is regular at the critical values of $y$;
\item $\forall i,j\;\, {\dfrac{\gamma_i}{\alpha_j}\in\mathbb{Z}}$.
\end{itemize}
\end{enumerate}


\begin{remark}
	

The restriction on zeros of $R-b_i$ in \ref{en:caseI} comes from the fact that if we do not impose it, then the procedure for determining $\tilde{\psi}$ described above gives us $\tilde{\psi}(y(z))$ but then we cannot represent it as a composition of $\tilde{\psi}(\theta)$ and $y(z)$ for some good $\tilde{\psi}(\theta)$. 
\end{remark}
\begin{remark}
	\ref{en:caseIII} can also be extended by adding logarithms of rational functions of $e^{\gamma_i\theta}$ to $\psi$, but this would require a nontrivial $\tilde{\psi}$ (i.e. not equal to $\psi$, as in \ref{en:caseI}, see below). This extension is not complicated, but we do not write it out explicitly in the present paper.
\end{remark}

\subsection{Case I}
Following the recipe described in Section~\ref{sec:tildepsi}, for~\ref{en:caseI} we obtain 
\begin{equation}\label{eq:hatpsicaseI}
	\tilde\psi=T(\ta)+\sum_{i}\frac{1}{\beta_i\cS(\beta_i\hbar\partial_\ta)}\log(\ta-b_i).
\end{equation}
Indeed, under our assumptions for~\ref{en:caseI} for $\tilde\psi$ given by \eqref{eq:hatpsicaseI} $\forall g\geq 1$ the differential $[\hbar^{2g}]\tilde \psi(y(z),\hbar)dy(z)$  only has poles at those poles of $d\psi(y(z))$ which are not poles of $dy(z)$, and the principal parts at these points agree with~\eqref{eq:tildepsipp}.

\begin{theorem}\label{th:caseI}
	Let $(\Sigma,\,x,\,y,\,\psi)$ be as in~\ref{en:caseI}
	 Assume that the four spectral curves $(\Sigma,x,y)$, $(\Sigma,y,x)$, $(\Sigma,y,x^\dag)$, $(\Sigma,x^\dag,y)$ satisfy the necessary TR conditions of Definition~\ref{def:necTR}.
	Let $\tilde \psi$ be given by \eqref{eq:hatpsicaseI}.  Then the system of differentials $\{\omega^{(g)}_n\}$ solves TR for the spectral curve $(\Sigma,x,y)$ if and only if its symplectic dual system of differentials $\{\omega^{\dag,(g)}_n\}$ solves TR for the spectral curve $(\Sigma,x^\dag,y)$.
\end{theorem}
\begin{proof}
	We follow the lines of the proof of Theorem~\ref{thm:TRsymplectic}, and the notation follows the notation of Theorem~\ref{thm:sympl-from-xy} (in particular, see~\eqref{eq:checkdef} and \eqref{eq:dagcheckdef}).
	
	
	

Transformation~\ref{en:trxy} of Theorem~\ref{thm:sympl-from-xy} preserves LogTR by Theorem~\ref{Th:x-y} (part~\ref{en:xydualTR}).
	Let us see that after the application of transformation~\ref{en:trShift1} of Theorem~\ref{thm:sympl-from-xy} LogTR still holds.
	
	
	Indeed, the linear loop equations for $(g,n)=(0,1)$ still hold due the fact that $\psi(\ta)$ is regular at the critical values of $x^\vee=x^{\dagger\vee}$ and thus $\psi(x^\vee(z))dx^\vee$ is regular and vanishing at the critical points of $x^\vee=x^{\dagger\vee}$. For $g>1$ we have $\omega^{\dag\vee,(g)}_1-\omega^{\vee,(g)}_1=\Bigl([\hbar^{2g}]\tilde\psi(x^\vee(z),\hbar)\Bigr)\,dx^\vee,$ where $[\hbar^{2g}]\tilde\psi(x^\vee(z),\hbar)\sim (x^\vee(z) - b_i)^{-2g}$ which is regular at the critical points of $x^\vee=x^{\dagger\vee}$ since $b_i$ do not coincide with critical values of $x^\vee=x^{\dagger\vee}$; thus linear loop equations still hold.
	Regarding the quadratic loop equations, the same reasoning as in Theorem~\ref{thm:TRsymplectic} works, as $\Bigl([\hbar^{2g}]\tilde\psi(x^\vee(z),\hbar)\Bigr)\,dx^\vee$ is symmetric under the deck transformations and thus gets killed by $\Delta$.
	
	
	The projection property is not completely trivial now as not only the $\omega^{(0)}_1$ differential gets changed, but $\omega^{(g)}_1$ for $g>0$ as well. 
	The only poles that $\omega^{\dag\vee,(g)}_1-\omega^{\vee,(g)}_1=\Bigl([\hbar^{2g}]\tilde\psi(x^\vee(z),\hbar)\Bigr)\,dx^\vee$ has are at the zeros of the expressions $R(z)-b_i$, and \eqref{eq:hatpsicaseI} guarantees that the principal parts there agree with \eqref{eq:PrincipalPartsLog}, where $\alpha_i^{-1}=\beta_i^{-1}$ is the residue of $\psi(x^\vee(z))$ at the given pole, due to our assumptions. 
	Thus the logarithmic projection property holds.
	
	Thus, after transformation~\ref{en:trShift1} LogTR still holds, and after transformation~\ref{en:trxyback} it holds as well again due to Theorem~\ref{Th:x-y} (part~\ref{en:xydualTR}), where we again note that the zeros of $dx^\dag$ are disjoint from the zeros of $dy$, same as in Theorem~\ref{thm:sympl-from-xy}.
\end{proof}

\begin{corollary}[Extended Family~I of \cite{BDKS2}] \label{cor:FamI}
In the case of $x^\dagger(z)=\log z$  we obtain a new proof of topological recursion for the (extended version of) Family I of Orlov--Scherbin hypergeometric tau functions from \cite{BDKS2}. The Orlov--Scherbin data of~\eqref{eq:OSpartfun} and the respective spectral curve data (on which the TR is run) for this family are given in the first line of Table~\ref{tab:families}.
\end{corollary}
\begin{proof}
	Follows from Theorem~\ref{th:caseI} and Proposition~\ref{prop:OSsympl}, and the fact that the $n$-point differentials produced by the logarithmic topological recursion on the spectral curve $(\mathbb{C}\mathrm{P}^1,\log z, y(z))$ with  $y(z)$ as in~\eqref{eq:yCaseI} are given exactly by the formula~\eqref{eq:OSDualDiff} with $\hat y(z,\hbar)=R(z)$.
		
	%
	Family~I of \cite{BDKS2} itself corresponds to setting $\beta_i=\pm 1$ and requiring that $T$ is a polynomial, and that $y(0)=0$ and $\psi(0)=0$.
\end{proof}

\subsection{Case II}

\begin{theorem}\label{th:caseII}
	Let $(\Sigma,\,x,\,y,\,\psi)$ be as in~\ref{en:caseII}
	Assume that the four spectral curves $(\Sigma,x,y)$, $(\Sigma,y,x)$, $(\Sigma,y,x^\dag)$, $(\Sigma,x^\dag,y)$ satisfy the necessary TR conditions of Definition~\ref{def:necTR}.
	Let $\tilde \psi=\psi$. 
	Then the system of differentials $\{\omega^{(g)}_n\}$ solves LogTR for the spectral curve $(x,y)$ if and only if its symplectic dual system of differentials $\{\omega^{\dag,(g)}_n\}$ solves LogTR for the spectral curve $(x^\dag,y)$.
\end{theorem}
\begin{proof}
Note that all new logarithmic singularities in $y^{\dagger\vee}$ in this case are not vital (they coincide with the singularities of $x^{\dagger\vee}$), so the proof is completely analogous to the proof of Theorem \ref{thm:TRsymplectic}.
\end{proof}

\begin{corollary}[Extended Family II of \cite{BDKS2}] \label{cor:FamII}
	In the case of $x^\dagger(z)=\log z$  we obtain a new proof of topological recursion for the (extended version of) Family II of Orlov--Scherbin hypergeometric tau functions from \cite{BDKS2}. The Orlov--Scherbin data of~\eqref{eq:OSpartfun} and the respective spectral curve data (on which the TR is run) for this family are given in the second line of Table~\ref{tab:families}.
\end{corollary}
\begin{proof}
Follows from Theorem~\ref{th:caseII} and Proposition~\ref{prop:OSsympl}, and the fact that the $n$-point differentials produced by the logarithmic topological recursion on the spectral curve $(\mathbb{C}\mathrm{P}^1,\log z, y(z))$ with  $y(z)$ as in~\eqref{eq:yCaseII} are given exactly by the formula~\eqref{eq:OSDualDiff} with \begin{equation}
	\hat y(z,\hbar) = R(z)+
\sum\limits_{i}\dfrac{1}{\alpha_i\cS(\alpha_i\hbar z\partial_z)}\log(z-a_i).
\end{equation} 

Family~II of \cite{BDKS2} itself corresponds to setting $\alpha_i=\pm 1$, $\varkappa_0=0$, and requiring that $y(0)=0$ and that $y$ does not have a vital singularity at $\infty$ with respect to $x$.
\end{proof}

The topological recursion for an extension of Family~II of \cite{BDKS2} was also earlier obtained in~\cite{kramer2022kp}.

\subsection{Case III}

\begin{theorem}\label{th:caseIII}
	Let $(\Sigma,\,x,\,y,\,\psi)$ be as in~\ref{en:caseIII}
	Assume that the four spectral curves $(\Sigma,x,y)$, $(\Sigma,y,x)$, $(\Sigma,y,x^\dag)$, $(\Sigma,x^\dag,y)$ satisfy the necessary TR conditions of Definition~\ref{def:necTR}.
	Let $\tilde \psi=\psi$. 
	Then the system of differentials $\{\omega^{(g)}_n\}$ solves LogTR for the spectral curve $(x,y)$ if and only if its symplectic dual system of differentials $\{\omega^{\dag,(g)}_n\}$ solves LogTR for the spectral curve $(x^\dag,y)$.
\end{theorem}

\begin{proof}
Note that all new logarithmic singularities in $y^{\dagger\vee}$ in this case are not vital (they coincide with the singularities of $x^{\dagger\vee}$), so the proof is completely analogous to the proof of Theorem \ref{thm:TRsymplectic}.
\end{proof}

\begin{corollary}[Extended Family III of {\cite[Section 3.11]{ABDKS4}}] \label{cor:FamIII}
	In the case of $x^\dagger(z)=\log z$  we obtain a proof of (logarithmic) topological recursion for the (extended version of) Family III of Orlov--Scherbin hypergeometric tau functions from {\cite[Section 3.11]{ABDKS4}}. The Orlov--Scherbin data of~\eqref{eq:OSpartfun} and the respective spectral curve data (on which the TR is run) for this family are given in the third line of Table~\ref{tab:families}.
\end{corollary}
\begin{proof}
	Follows from Theorem~\ref{th:caseIII} and Proposition~\ref{prop:OSsympl}, and the fact that the $n$-point differentials produced by the logarithmic topological recursion on the spectral curve $(\mathbb{C}\mathrm{P}^1,\log z, y(z))$ with  $y(z)$ as in~\eqref{eq:yCaseIII} are given exactly by the formula~\eqref{eq:OSDualDiff} with \begin{equation}
		\hat y(z,\hbar) = \sum\limits_{i}\dfrac{1}{\alpha_i\cS(\alpha_i\hbar z\partial_z)}\log(z-a_i).
		\end{equation}
	
	 Family~III of~\cite[Section 3.11]{ABDKS4} itself corresponds to setting $\alpha_i=\pm 1$, $\varkappa=0$, $K=1$, $T(\zeta)=\alpha(\zeta-1)$, $\gamma_1=1$, and requiring that $y(0)=0$.
	\end{proof}

\subsection{Summary of implications for weighted Hurwitz numbers}\label{sec:HurwSummary}

Let us discuss the statements of Corollaries~\ref{cor:FamI}, \ref{cor:FamII}, and \ref{cor:FamIII} in a bit more detail. 

Recall formula~\eqref{eq:OSpartfun} for the Orlov-Scherbin partition function:
\begin{equation}
	Z^{\mathrm{OS}}=\sum_{\lambda}\shin_\lambda(t)\shin_\lambda(\hbar^{-1} s)\exp\left(\sum_{(i,j)\in \lambda}\hat\psi(\hbar(i-j),\hbar)\right),
\end{equation}
which defines weighted Hurwitz numbers $h^{(g)}_{k_1,\dots,k_n}$ by
\begin{equation}\label{eq:HurwNums}
	h^{(g)}_{k_1,\dots,k_n}\coloneqq 
	[\hbar^{2g-2+n}]\,\frac{\partial^n \log Z^{\mathrm{OS}}}{\prod_{i=1}^n \partial t_{k_i}}\Bigg|_{t_1=t_2=\cdots=0},	
\end{equation}
see~\cite[Section~1.2]{BDKS2} for more details. Note (or recall) that this partition function and thus the respective weighted Hurwitz numbers depend on some formal series $\hat \psi(\theta,\hbar)$ and some series $\hat y(\theta,\hbar)=\sum_{j=1}^\infty s_j(\hbar)z^j$ which encodes the $s$-parameters. We call the pair $(\hat \psi, \hat y)$ the \emph{Orlov--Scherbin data}.

In general the numbers~\eqref{eq:HurwNums} have an explicit combinatorial meaning via the works of Guay-Paquet and Harnad \cite{guay2017generating,harnad2016weighted} (they do not consider the case when $\hat \psi$ and $\hat y$ depend on $\hbar$ themselves, but it can easily be included into their considerations) in terms of weighted counts of paths in the Cayley graph of the symmetric group. For certain particular cases of $\hat \psi$ and $\hat y$ these numbers are certain Hurwitz-type numbers well known from the literature, see Tables~\ref{tab:Hurw},~\ref{tab:Hurw2}.


Corollaries~\ref{cor:FamI}, \ref{cor:FamII}, and \ref{cor:FamIII} state that for three very general families corresponding to the Orlov--Scherbin data given in Table~\ref{tab:families}, the expansion coefficients of the $n$-point differentials produced by the logarithmic topological recursion (which coincides with the original TR for Families I and II, but not for Family III) on the respective spectral curves give the respective weighted Hurwitz numbers. Specifically, near $X=0$, 
\begin{equation}
	\omega^{(g)}_n =\sum_{k_1,\dots,k_n=1}^\infty h^{(g)}_{k_1,\dots,k_n} \prod_{i=1}^n k_i X_i^{k_i-1} dX_i + \delta_{(g,n),(0,2)} \dfrac{dX_1dX_2}{(X_1-X_2)^2},
\end{equation}
where $X_i=X(z_i)$ and $X(z)=e^{x(z)}$.

\begin{table}[H]
	\scalebox{0.82}{
		\begin{tabular}{|c||l|l|}
			\hline
			\Gape[0.2cm]{Family} & Orlov--Scherbin data & Spectral curve data\\ 
			\hline
			\hline
			\Gape[0.5cm]{I} &  $\hat\psi(\ta,\hbar)=\cS(\hbar \partial_\ta)T(\ta)+\sum\limits_{i}\dfrac{\cS(\hbar \partial_\ta)}{\beta_i\cS(\beta_i\hbar\partial_\ta)}\log(\ta-b_i)$ 
			&  $x(z)=\log z - \left(T(y(z))+\sum\limits_{i}\beta^{-1}_i\log(y(z)-b_i)\right)$  
			 \\
			\phantom{\Gape[0.5cm]{I}}& $\hat y(z,\hbar)=R(z)$& $
			y(z)=R(z)$  \\ 
			\hline			
			\Gape[0.5cm]{II} & $\hat\psi(\ta,\hbar) = \varkappa_1 \ta + \varkappa_0$ &  $x(z)=\log z - \left(\varkappa_1 y(z) + \varkappa_0\right)$\\
			\phantom{\Gape[0.5cm]{I}}& $
			\hat y(z,\hbar) = R(z)+
			\sum\limits_{i}\dfrac{1}{\alpha_i\cS(\alpha_i\hbar z\partial_z)}\log(z-a_i)$  & $
			y(z)=R(z)+\sum\limits_{i}\alpha_i^{-1}\log(z-a_i)$  \\
			\hline
			\Gape[0.5cm]{III}& $\hat\psi(\ta,\hbar) = \cS(\hbar \partial_\ta)T(e^{\gamma_1 \ta},\dots,e^{\gamma_K \ta})+\varkappa\, \ta$ &  $x(z) = \log z - \left(T(e^{\gamma_1 y(z)},\dots,e^{\gamma_K y(z)})+\varkappa\, y(z)\right)$  \\
			\phantom{\Gape[0.5cm]{I}} & $
			\hat y(z,\hbar) = \sum\limits_{i}\dfrac{1}{\alpha_i\cS(\alpha_i\hbar z\partial_z)}\log(z-a_i)$& $
			y(z) = \sum\limits_{i}\alpha_i^{-1}\log(z-a_i)$ \\
			\hline
	\end{tabular} }
	\caption{
		Families of weighted Hurwitz numbers, represented by the respective Orlov--Scherbin data, for which symplectic duality implies the (logarithmic) topological recursion on the respective spectral curve $\left(\mathbb{C}\mathrm{P}^1,x(z),y(z)\right)$ via Corollaries~\ref{cor:FamI}, \ref{cor:FamII}, and \ref{cor:FamIII} respectively.		
		Here $T$ and $R$ are some rational functions and $a_i$, $\alpha_i$, $b_i$, $\beta_i$, $\varkappa_i$, $\varkappa$, $\gamma_i$ are some constants such that the conditions given in the respective Cases of Section~\ref{sec:principleAtWork} are satisfied.}
	\label{tab:families}
\end{table}

Family I 
includes, as special cases, weighted Hurwitz problems listed in Table \ref{tab:Hurw}.

\begin{table}[H]
\renewcommand{\arraystretch}{1.5}
\scalebox{0.9}{
\begin{tabular}{|c|c|c|}
	\hline
	Hurwitz numbers & $\hat\psi(\ta,\hbar)$ & $x(z)$\\
	\hline
	\hline
	simple & $\ta$ & $\log z - y(z)$ \\
	\hline
	r-spin
	& $\cS(\hbar \partial_\ta)\, \ta^r$ & $\log z - y^r(z)$ \\
	\hline
	monotone & $\log\big(1/(1-\ta)\big)$ & $\log z - \log\big(1/(1-y(z))\big)$ \\
	\hline
	strictly monotone & $\log(1+\ta)$ & $\log z - \log(1+y(z)$ \\
	\hline
	hypermaps & $\log\big((1+u\,\ta)(1+v\,\ta)\big)$ & $\log z - \log\big((1+u\,y(z))(1+v\,y(z))\big)$\\
	\hline
	polynomially weighted& $\log\left(\sum_{k=1}^d c_k \ta^k\right)$ & $\log z - \log\left(\sum_{k=1}^d c_k y^k(z)\right)$\\	
	\hline
	rationally weighted& $\log\left(\frac{\sum_{k=1}^{d_1} a_k \ta^k}{\sum_{k=1}^{d_2} b_k \ta^k}\right)$ & $\log z - \log\left(\frac{\sum_{k=1}^{d_1} a_k \ta^k}{\sum_{k=1}^{d_2} b_k \ta^k}\right)$\\
	\hline
	\end{tabular} } 
\scalebox{0.9}{ \begin{tabular}{|c|c|c|}
	\hline
	Variations &  $\hat y(z,\hbar)$ & $y(z)$ \\
	\hline
	\hline
	single & $z$ & $z$ \\
	orbifold & $z^q$ & $z^q$ \\
	\Gape[0.1cm]{\shortstack{polynomial\\ double}} & \raisebox{0.2cm}{$\sum_{k=1}^d s_kz^k$} & \raisebox{0.2cm}{$\sum_{k=1}^d s_kz^k$} \\
	\hline
	\end{tabular} }
	\renewcommand{\arraystretch}{1}
	\caption{Types of Hurwitz numbers (known from the literature) belonging to Family I, with their Orlov--Scherbin data and the respective spectral curve data.}
	\label{tab:Hurw}
\end{table}

Family II, out of already known examples from the literature, includes the cases of the extended Ooguri-Vafa partition function for the HOMFLY-PT polynomials of torus knots~\cite{HOMFLY2,HOMFLY1}, the Mari\~no-Vafa numbers~\cite{kramer2022kp} (which correspond to triple Hodge integrals via the Mari\~no--Vafa formula~\cite{MV02,liu2003proof}), and of the simple double Hurwitz numbers (the latter case is also included in Family I), see Table \ref{tab:Hurw2}.

\begin{table}[H]
\renewcommand{\arraystretch}{1}
\scalebox{0.9}{
\begin{tabular}{|c|c|c|c|c|}
\hline
\Gape[0.3cm]{Hurwitz numbers} & $\hat\psi(\ta,\hbar)$ & $\hat y(z,\hbar)$ & $x(z)$ &$y(z)$\\
\hline
\hline
\Gape[0.3cm]{simple polynomial double} & $\ta$ & $\sum_{k=1}^d s_kz^k$ & $\log z - y(z)$ & $\sum_{k=1}^d s_kz^k$ \\
\hline
\Gape[0.3cm]{extended Ooguri--Vafa} & $\frac{P}{Q}\,\ta$ &  $\frac{1}{\cS(\hbar z\partial_z)}\log \left(\frac{1-A^{-1}z}{1-Az}\right)$ & $\log z - \frac{P}{Q}\,y(z)$ & $\log \left(\frac{1-A^{-1}z}{1-Az}\right)$ \\
\hline
\Gape[0.3cm]{Mari\~no--Vafa} & $-\ta/w$ &  $-\frac{1}{\cS(\hbar z\partial_z)}\log \left(1-\beta w z\right)$ & $\log z +y(z)/w$ & $-\log \left(1-\beta w z\right)$ \\
\hline
\end{tabular} }
\renewcommand{\arraystretch}{1}
\caption{Types of Hurwitz numbers (known from the literature) belonging to Family II, with their Orlov--Scherbin data and the respective spectral curve data.} 
\label{tab:Hurw2}
\end{table}

\begin{remark}
	Note that proving TR for various Hurwitz-type numbers (listed in Tables~\ref{tab:Hurw},~\ref{tab:Hurw2}) used to be a very complicated task, with numerous papers dedicated to careful case-by-case analysis which allowed to do it (e.g. \cite{DKOSS-ELSV,DLPS,HOMFLY1,DKPS-rspin,BDKLM}). Even the paper~\cite{BDKS2}, in which TR was proved for two rather general families of weighted Hurwitz numbers (which include all the above cases), required some careful pole structure analysis. But in the present paper we get a considerably more general result (which, in particular, includes all of the above) basically for free, from the known properties of symplectic duality and from the observation of Proposition~\ref{prop:OSsympl} that the explicit formulas for the generating functions of weighted Hurwitz numbers obtained in~\cite{BDKS1} coincide with the result of applying symplectic duality to a certain simple system of differentials.
\end{remark}

\begin{remark}
	Note that the original topological recursion and the logarithmic topological recursion require $dx$ to have simple critical points and $dy$ to be regular and nonvanishing at the zero locus of $dx$. This is not the case for some particular values of parameters for some of the entries of Tables~\ref{tab:Hurw} and~\ref{tab:Hurw2} (but when the parameters are in general position these requirements do hold). However, there are ways to define generalizations of topological recursion which do not have such requirements. It is the Bouchard-Eynard recursion~\cite{BE13} and the generalized topological recursion (GenTR)~\cite{ABDKS7}. The respective statements for the cases when the requirements of the original TR and the LogTR do not hold can be obtained by taking limits. This is quite non-trivial for the Bouchard-Eynard recursion (see~\cite{limits}), but is much easier for GenTR (this is discussed in~~\cite{ABDKS7}).
\end{remark}

\begin{remark}
	Note that when we talk about ``double Hurwitz numbers'' in the present text, we treat the $s$-variables in~\eqref{eq:OSpartfun} as parameters which get some values prescribed by $\hat y$, and not as formal variables like the $t$-variables for the partition function to be expanded in. This means that, depending on the choice of $\hat y$, our Hurwitz numbers~\eqref{eq:HurwNums} are certain weighted sums over $L$ of the proper double Hurwitz numbers which depend on two Young diagrams $K=(k_1,\dots,k_n)$ and $L=(\ell_1,\dots,\ell_m)$ (and correspond to the expansion coefficients of the logarithm of~\eqref{eq:OSpartfun} in both sets of variables $t$ and $s$).
\end{remark}

\section{More examples} \label{sec:MoreExamples}

In this section we present several examples supplementing those in~\cite{ABDKS4} and demonstrating interaction between the notions of $x-y$ duality, symplectic duality, topological recursion, and LogTR.

\subsection{Topological recursion with unramified \texorpdfstring{$y$}{y}-function}

Proposition~3.1 of~\cite{ABDKS4} provides explicit formulas for TR differentials for the spectral curves of genus zero and $(x,y)$ with arbitrary~$x$ such that $dx$ is rational and with unramified $y$, that is, $dy$ never vanishes and $y$ can be parametrized either as $y=z$ or as $y=\log z$ with a global coordinate $z$. 
For $x$ we can write
\begin{equation}
	x(z)=\sum_{i=1}^M\frac{\log(z-a_i)}{\alpha_i}+R(z),
\end{equation}
where $a_1,\dots,a_M$ are the LogTR-vital singularities of~$x$ and $1/\alpha_i$ are the corresponding residues of~$dx$. The term $R(z)$ may contain contributions of other non-LogTR-vital logarithmic singularities, so we do not assume that $R$ is rational.

The formulas of~\cite[Proposition~3.1]{ABDKS4} express the TR differentials for such a spectral curve as $x-y$ dual to the trivial LogTR differentials for the respective spectral curve $(y,x)$. Namely, we have
\begin{align}\label{eq:y=z}
y&=z:&\omega^{(g)}_n&=[\hbar^{2g-2+n}](-1)^n\prod_{i=1}^n
\Bigl(\sum_{r=0}^\infty \bigl(-d_i\tfrac{1}{dx_i}\bigr)^{r}[u_i^r]e^{u_i(\cS(u_i\hbar\partial_{z_i})\hat x_i-x_i)}dz_i\Bigr)
\\\notag&&&
\qquad\qquad(-1)^{n-1}\!\!\sum_{\sigma\in\{n\text{-cycles}\}}\prod_{i=1}^n\frac{1}{z_i+\frac{u_{i}\hbar}{2}-z_{\sigma(i)}+\frac{u_{\sigma(i)}\hbar}{2}},
\\\label{eq:y=logz}
y&=\log z:&\omega^{(g)}_n&=[\hbar^{2g-2+n}](-1)^n\prod_{i=1}^n
\Bigl(\sum_{r=0}^\infty \bigl(-d_i\tfrac{1}{dx_i}\bigr)^{r}[u_i^r]e^{u_i(\cS(u_i\hbar z_i\partial_{z_i})\hat x_i-x_i)}dz_i\Bigr)
\\\notag&&&
\qquad\qquad(-1)^{n-1}\!\!\sum_{\sigma\in\{n\text{-cycles}\}}\prod_{i=1}^n\frac{1}{e^{\frac{u_{i}\hbar}{2}}z_i-e^{-\frac{u_{\sigma(i)}\hbar}{2}}
z_{\sigma(i)}},
\end{align}
where
\begin{equation}	
\hat{x}(z,\hbar)=\sum_{i=1}^M\frac1{\cS(\alpha_i\hbar\partial_{y_i})}\frac{\log(z-a_i)}{\alpha_i}+R(z).
\end{equation}

To be precise, Eq.~\eqref{eq:y=z} and~\eqref{eq:y=logz} provide the LogTR differentials which coincide, in fact, with TR differentials in the case $y=z$ for any $x$ and also in the case $y=\log z$ if $dx$ has a pole both at $z=0$ and $z=\infty$. The next theorem shows that these formulas are contained implicitly in the closed expression for the Orlov--Scherbin family derived earlier in \cite{BDKS2} and \eqref{eq:OS}.

\begin{proposition}\label{prop:xytriv-OS}
1. If $y=z$, then the differentials~\eqref{eq:y=z} can be obtained as a special case of the Orlov--Scherbin differentials~\eqref{eq:OS} with
\begin{equation}
\hat y(z)=z,\qquad \tilde\psi(\ta)=\frac{1}{\cS(\hbar\partial_\ta)}\log \ta-\hat x(\ta).
\end{equation}

2. If $y=\log z$, then the differentials~\eqref{eq:y=logz} can be obtained as a special case of the Orlov--Scherbin  differentials~\eqref{eq:OS} with
\begin{equation}
\hat y(z)=\log z-\hat x(z),\qquad \tilde\psi(\ta)=\ta.
\end{equation}
\end{proposition}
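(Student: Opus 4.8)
The plan is to verify both statements by substituting the prescribed data $(\hat y,\tilde\psi)$ directly into the Orlov--Scherbin formula \eqref{eq:OS} and checking, factor by factor in the product over $i$, that the resulting one-variable operator together with the $\bW$-prefactor reproduces the corresponding factor of \eqref{eq:y=z} or \eqref{eq:y=logz}. Both families share the skeleton $[\hbar^{2g-2+n}]\prod_i(\text{operator}_i)\cdot(-1)^{n-1}\sum_\sigma\prod_i(\cdots)$, and the two cycle-sum kernels $\frac{1}{e^{u_i\hbar/2}z_i-e^{-u_{\sigma(i)}\hbar/2}z_{\sigma(i)}}$ (multiplicative/log type) and $\frac{1}{z_i+\frac{u_i\hbar}2-z_{\sigma(i)}+\frac{u_{\sigma(i)}\hbar}2}$ (additive/linear type) are exactly the trivial seeds appearing in Example~\ref{ex:(logz,z)}; thus the whole content is to match the one-variable dressings and to bookkeep signs. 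Equivalently, one may phrase the argument conceptually through Proposition~\ref{prop:OSsympl} (Orlov--Scherbin differentials as a symplectic dual of a trivial seed) and Theorem~\ref{thm:sympl-from-xy} (symplectic duality as a conjugate of $x-y$ duality), but the direct substitution is the most transparent route.

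I would treat the multiplicative case (part 2, $y=\log z$) first, since it is cleanest. With $\tilde\psi(\ta)=\ta$ one has $\psi(\ta)=\ta$, hence $e^{-v\psi(\theta)}\partial_\theta^r e^{v\cS(v\hbar\partial_\theta)\tilde\psi(\theta)}=e^{-v\theta}\partial_\theta^r e^{v\theta}=v^r$, so after $\restr{\theta}{y}$ and extraction of $[v^j]$ the operator $U$ of \eqref{eq:U} collapses to $\sum_{r\ge0}(d\tfrac1{dx})^r[u^r]$. Next, $\hat y=\log z-\hat x$ gives the undeformed $y=\log z-x$ and therefore $x^{\mathrm{OS}}=\log z-\psi(y)=x$; using $\cS(u\hbar z\partial_z)\log z=\log z$ (because $(z\partial_z)^2\log z=0$) the $\bW$-prefactor $e^{u(\cS(u\hbar z\partial_z)\hat y-y)}$ simplifies to $e^{-u(\cS(u\hbar z\partial_z)\hat x-x)}$, i.e.\ precisely the target prefactor of \eqref{eq:y=logz} but with $u\mapsto-u$. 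Substituting $u_i\mapsto-u_i$ in \eqref{eq:y=logz} converts $\sum_r(-d_i\tfrac1{dx_i})^r[u_i^r]$ into $\sum_r(d_i\tfrac1{dx_i})^r[u_i^r]$; using that $\cS$ is even and relabelling the cycle sum by $\sigma\mapsto\sigma^{-1}$ (which costs a factor $(-1)^n$), the accumulated signs $(-1)^n\cdot(-1)^n\cdot(-1)^{n-1}$ collapse to the single $(-1)^{n-1}$ of \eqref{eq:OS}, and the two expressions coincide.

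For the additive case (part 1, $y=z$) the mechanism is the same, but $\tilde\psi(\ta)=\tfrac1{\cS(\hbar\partial_\ta)}\log\ta-\hat x(\ta)$ now carries both the dressing $-\hat x$ and the distinguished piece $\tfrac1{\cS(\hbar\partial_\ta)}\log\ta$. Taking $\hat y=z$ gives $y=z$ and again $x^{\mathrm{OS}}=x$, while the $\bW$-prefactor becomes $e^{uz(\cS(u\hbar)-1)}$, i.e.\ the multiplicative/log seed. The key point is that the extra summand $\tfrac1{\cS(\hbar\partial_\ta)}\log\ta$ of $\tilde\psi$ is exactly the generating datum of the $(\log z,z)$ transform of Example~\ref{ex:(logz,z)}: inserting it into $U$ and restricting $\theta\to z$ conjugates the multiplicative kernel into the additive kernel $\frac{1}{z_i+\frac{u_i\hbar}2-z_{\sigma(i)}+\frac{u_{\sigma(i)}\hbar}2}$ and turns $\cS(u\hbar z\partial_z)$ into $\cS(u\hbar\partial_z)$, leaving the $-\hat x$ part to build the prefactor $e^{u(\cS(u\hbar\partial_z)\hat x-x)}$ of \eqref{eq:y=z}. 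I would organize this using the group property of symplectic duality, which lets one split $\tilde\psi$ into $\tfrac1{\cS(\hbar\partial_\ta)}\log\ta$ and $-\hat x$ and apply the two transformations in turn. I expect this conversion of the log-type kernel into the linear-type kernel to be the main obstacle, and it is precisely where the connected Cauchy-determinant identity already used in the proof of Proposition~\ref{prop:OSsympl} carries out the bookkeeping.

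Finally, I would dispose of the one-point case separately, since \eqref{eq:y=z} and \eqref{eq:y=logz} are written uniformly in $n$ whereas \eqref{eq:OS} carries the extra $\delta_{n,1}$ summand on its last line. In the multiplicative case the exponent $\cS(v\hbar\partial_{y_1})\tilde\psi(y_1)-\psi(y_1)$ vanishes identically (as $\partial_{y_1}^2 y_1=0$), so that correction term is zero for $g\ge1$ and there is nothing to check. In the additive case it is genuinely $\hbar$-dependent, and one verifies that it reproduces exactly the $g\ge1$, $m=1$ logarithmic principal parts that are already incorporated into the uniform formula \eqref{eq:y=z}, so the one-point differentials agree as well.
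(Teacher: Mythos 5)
Your part 2 is correct and is essentially the paper's own argument: with $\tilde\psi(\ta)=\ta$ the operator in \eqref{eq:OS} collapses to $v^r$, the prefactor reduces to that of \eqref{eq:y=logz} with $u_i\mapsto-u_i$, the signs cancel, and the $\delta_{n,1}$ correction term dies for $g\ge1$.

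Part 1, however, has a genuine gap. The step you yourself flag as the main obstacle --- converting the multiplicative kernel $\frac{1}{e^{u_i\hbar/2}z_i-e^{-u_{\sigma(i)}\hbar/2}z_{\sigma(i)}}$ produced by the seed $\hat y=z$ into the additive kernel of \eqref{eq:y=z}, while trading $\cS(u\hbar z\partial_z)$ for $\cS(u\hbar\partial_z)$ --- is not done by the connected Cauchy determinant identity. That identity only re-sums a graph expansion into a cycle sum over a \emph{fixed} kernel; the kernel conversion you need is the full content of the $x-y$ duality relation \eqref{eq:bWveetobW} specialized to the pair $(z,\log z)$, which the paper explicitly notes is a nontrivial identity even for the trivial seed (end of Example~\ref{ex:(logz,z)}). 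Your fallback via the group property also does not work as written: splitting $\psi=\log\ta-x(\ta)$ and applying the $\log\ta$ piece first sends $(\log z,z)$ to the degenerate curve $(0,z)$, while in the other order the $\log\ta$ step is not literally the $x-y$ duality in the form recorded in the paper (that identification requires the $y$-coordinate of the symplectic duality to be the product of the two dual coordinates), so additional justification is needed that you have not supplied. The paper closes the gap by the ``conceptual route'' you set aside: by Proposition~\ref{prop:OSsympl} and Theorem~\ref{thm:sympl-from-xy}, the right-hand side of \eqref{eq:OS} with the stated data is the $x-y$ dual of the system obtained from the trivial seed by one $x-y$ duality (giving \eqref{eq:(logz,z)}) followed by the $(g,1)$-shift by $-[\hbar^{2g}]\tilde\psi(z)\,dz$; the resulting system has only the one-point differentials $[\hbar^{2g}]\bigl(\tfrac{1}{\cS(\hbar\partial_z)}\log z-\tilde\psi(z)\bigr)dz=[\hbar^{2g}]\hat x\,dz$ nonzero, which is precisely the input whose $x-y$ dual is \eqref{eq:y=z}. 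All kernel manipulations are thereby absorbed into results already proved, the $n=1$ correction terms are handled automatically (whereas in your sketch the matching of the $\delta_{n,1}$ summand for the additive case is asserted rather than verified), and the whole proof reduces to the one-line cancellation above.
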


In both cases Eq.~\eqref{eq:OS} should be applied formally ignoring the condition $y(0)=\psi(0)=0$ required for the Orlov--Scherbin family. It follows that this formula provides the correct answer even though the corresponding differentials could be singular at $z=0$ and have no VEV presentation.

\begin{proof} In the case $y=z$ we treat expressions for the differentials of the Orlov--Scherbin family as a special case of symplectic duality and apply Theorem~\ref{thm:sympl-from-xy}. We start with the trivial differentials associated with the spectral curve $(z,\log z)$. On the first step we apply $x-y$ duality and obtain LogTR differentials for the spectral curve $(\log z,z)$ that we denote by $\tilde\omega^{(g)}_n$. By Example~\ref{ex:(logz,z)} we know that they are trivial for $n>1$ and in the case $n=1$ they are given by~\eqref{eq:(logz,z)},
\begin{equation}
\tilde\omega^{(g)}_1=\Bigl([\hbar^{2g}]\tfrac{1}{\cS(\hbar\partial_z)}\log z\Bigr)\;dz.
\end{equation}
On the next step we modify these $(g,1)$ differentials by adding terms corresponding to the chosen function $\tilde\psi$ and obtain the differentials $\omega^{\vee,(g)}_n$ which are trivial for $n>1$ and in the case $n=1$ given by
\begin{equation}
\omega^{\vee,(g)}_1=\tilde\omega^{(g)}_1-[\hbar^{2g}]\tilde\psi(z)\;dz=[\hbar^{2g}]\hat x\;dz.
\end{equation}
 These are exactly the differentials $x-y$ dual to the differentials~\eqref{eq:y=z}. Therefore, when we apply $x-y$ duality to these differentials we obtain exactly the differentials~\eqref{eq:y=z}.

 In the case $y=\log z$ the argument is even simpler. We notice that in the case $\tilde\psi(\ta)=\ta$ we have
 \begin{equation}
e^{-v\psi(\ta)}\partial_\ta^r e^{v\cS(v\hbar\partial_\ta)\tilde\psi(\ta)}=v^r,
\qquad
(\cS(u_i\hbar z\partial_{z})-1)\log z=0.
\end{equation}
Therefore, the right hand side of~\eqref{eq:OS} simplifies to~\eqref{eq:y=logz} with $u_i$ replaced by $-u_i$.
\end{proof}

\subsection{Shifted Kontsevich--Witten potential}

The Kontsevich--Witten (KW) partition function can be computed by TR with the spectral curve $(x,y)=(z^2/2,z)$. We would like to relate this TR to the TR differentials of the Orlov--Scherbin family. One of the ways to do that is to apply Proposition~\ref{prop:xytriv-OS} which claims that the TR differential for the spectral curve $(z^2/2,z)$ can be computed by the closed formulas~\eqref{eq:OS} for the Orlov--Scherbin family with the following specialization of parameters:
\begin{equation}
\hat y(z)=y(z)=z,
\qquad \tilde \psi(\ta)=\frac{1}{\cS(\hbar\partial_\ta)}\log \ta-\frac{\ta^2}{2}.
\end{equation}
This specialization does not satisfy the requirement $\psi(0)=0$ and does not imply the Orlov--Scherbin type of the associated tau function. Another way to relate the KW partition function to the Orlov--Scherbin family is to modify slightly the function $y$, which corresponds to a certain shift of times in the KW partition function. For example, consider the following spectral curve data
\begin{equation}\label{eq:KWmod}
\left(\frac {z^2}{2},~\frac{z}{1-a^2z^2}\right),
\end{equation}
where $a\ne0$ is a complex parameter. The initial TR differential for this spectral curve is
\begin{equation}
\omega^{(0)}_1=\frac{z}{1-a^2z^2}\,d\Bigl(\frac{z^2}{2}\Bigr)=z\;d\bigl(-\tfrac{1}{2 a^2}\log(1-a^2 z^2)\bigr)
\end{equation}
and, since the zeros of the differentials $d(z^2/2)=zdz$ and $d\bigl(-\tfrac{1}{2 a^2}\log(1-a^2 z^2)\bigr)=zdz/(1-a^2 z^2)$ coincide, it shows that the TR for the spectral curves $(z^2/2,z/(1-a^2z^2))$ and $\bigl(-\tfrac{1}{2 a^2}\log(1-a^2 z^2),\;z\bigr)$ also coincide. This leads to the following variety of possibilities for computing these differentials, see \cite{BM,ZhouKW,Eynard-intersections,Chen,DKPS-cut-and-join,AleKPIII,KN,ABDKS4}.

\begin{proposition}
The following five ways for the computation of the TR differentials $\omega^{(g)}_n$ associated with the spectral curve~\eqref{eq:KWmod} are equivalent to one another:
\begin{enumerate}
\item apply directly the TR procedure for the spectral curve data~\eqref{eq:KWmod} in a straightforward way;

\item use the following explicit expressions for these differentials in terms of intersection numbers of $\psi$ and $\kappa$ classes over moduli spaces:
\begin{align}
\xi_k(z)&=\Bigl(-\frac{1}{z}\frac{d}{dz}\Bigr)^k\frac{1}{z}=\frac{(2k-1)!!}{z^{2k+1}},
\\\omega^{(g)}_n&=\sum_{k_1,\dots,k_n}
\int_{\overline{\mathcal M}_{g,n}}e^{\sum_{k=1}^\infty s_ka^{2k}\kappa_k}\psi_1^{k_1}\dots\psi_n^{k_n}
\prod_{i=1}^nd\xi_{k_i}\!(z_i),
\end{align}
where the constants $s_k$ are determined by the expansion $e^{-\sum_{k=1}^\infty s_kt^k}=\sum_{k=0}^\infty (2k+1)!!t^k$;
\item apply $x-y$ duality formula~\eqref{eq:y=z} to the spectral curve 
\begin{align}
(x,y)=\bigl(-\tfrac{1}{2 a^2}\log(1-a^2 z^2),\;z\bigr)
\end{align} 
with
\begin{equation}
\hat x(z)=\frac{1}{\cS(2a^2\hbar\partial_z)}x(z);
\end{equation}
\item use an expression for these differentials in terms of triple Hodge integrals associated with the spectral curve of item~(3):
\begin{align}
\xi_k(z)&=\Bigl(-\frac{d}{dx(z)}\Bigr)^k\frac{1}{z},
\\\Lambda(q)&=\sum_{k=1}^g q^k\lambda_k,
\\\omega^{(g)}_n&=\sum_{k_1,\dots,k_n}
\int_{\overline{\mathcal M}_{g,n}}\Lambda(-2a^2)\Lambda(-2a^2)\Lambda(a^2)
e^{\sum_{k=1}^\infty \frac{(-1)^ka^{2k}}{k}\kappa_k}\psi_1^{k_1}\dots\psi_n^{k_n}
\;\prod_{i=1}^n d\xi_{k_i}\!(z_i);
\end{align}
\item
apply Eq.~\eqref{eq:OS} for the Orlov--Scherbin family with the following specialization of parameters:
\begin{gather}
\hat y(\zeta)=y(\zeta)=\zeta,
\qquad
\psi(\ta)=-\log(1-a^3\ta),
\quad\tilde\psi(\ta)=\frac{1}{\cS(\hbar\partial_\ta)}\psi(\ta),
\\X=\zeta\,e^{-\psi(y(\zeta))}=\zeta-a^3\zeta^2,
\end{gather}
where $\zeta=\frac{1-a\,z}{2a^3}$ is a new affine coordinate on $\Sigma=\C \mathrm{P}^1$.
\end{enumerate}
\end{proposition}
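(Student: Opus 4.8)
The plan is to treat all five recipes as different presentations of one and the same collection of differentials $\omega^{(g)}_n$ with $2g-2+n>0$, and to organise everything around a single unifying observation. The three spectral curves that occur — namely $\bigl(\frac{z^2}2,\frac{z}{1-a^2z^2}\bigr)$ in item~(1), $\bigl(-\frac1{2a^2}\log(1-a^2z^2),z\bigr)$ in item~(3), and the genuine Orlov--Scherbin curve of item~(5) — all carry the standard Bergman kernel $\frac{dz_1dz_2}{(z_1-z_2)^2}$, all have a single zero of $dx$ (at $z=0$, with deck transformation $z\mapsto-z$), and all produce the same $1$-form $(y(z)-y(-z))\,dx(z)=\frac{2z^2}{1-a^2z^2}\,dz$. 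Since the topological recursion for $2g-2+n>0$ depends on the spectral curve only through $B$ and through this combination — it is exactly what enters the recursion kernel — the three curves yield identical differentials. First I would verify this product identity in each of the three coordinates, an elementary computation, which already identifies items~(1), (3) and~(5) provided the closed formulas~\eqref{eq:y=z} and~\eqref{eq:OS} are genuinely applicable with the stated data.

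Next I would check that applicability. For item~(3) the curve has unramified $y=z$, so formula~\eqref{eq:y=z} (Proposition~3.1 of~\cite{ABDKS4}) computes its differentials; the only point is that $\hat x=\frac1{\cS(2a^2\hbar\partial_z)}x$ records the two LogTR-vital singularities correctly. A residue computation shows $dx=\frac{z}{1-a^2z^2}\,dz$ has simple poles at $z=\pm\frac1a$, both of residue $-\frac1{2a^2}$, hence both with $\alpha=-2a^2$; since $\cS$ is even, the two summands of $\hat x$ prescribed by~\eqref{eq:y=z} combine into $\frac1{\cS(2a^2\hbar\partial_z)}\bigl(-\frac1{2a^2}\log(1-a^2z^2)\bigr)$ up to an additive constant absorbed into the regular part, which is the claim. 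For item~(5), $\psi(\ta)=-\log(1-a^3\ta)$ and $\hat y(\zeta)=\zeta$ satisfy $\psi(0)=0$, $y(0)=0$, so this is a bona fide Orlov--Scherbin family and~\eqref{eq:OS} applies; I would confirm that the affine change $\zeta=\frac{1-az}{2a^3}$ (which preserves $B$) turns $X=\zeta-a^3\zeta^2$ into $x=\log X$ with $(y-y\circ\sigma)\,dx$ again equal to $\frac{2z^2}{1-a^2z^2}\,dz$. This closes $(1)\Leftrightarrow(3)\Leftrightarrow(5)$, with Proposition~\ref{prop:xytriv-OS} supplying the direct bridge between the two closed formulas~\eqref{eq:y=z} and~\eqref{eq:OS}.

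It remains to connect the two moduli-space formulas to this common object. For item~(2) I would use that $\bigl(\frac{z^2}2,\frac{z}{1-a^2z^2}\bigr)$ is a deformation of the Airy/Kontsevich--Witten curve $\bigl(\frac{z^2}2,z\bigr)$ in the $y$-direction, $y=z+a^2z^3+a^4z^5+\cdots$, and invoke the standard correspondence between such deformations and insertions of $\kappa$-classes in the intersection-number expression for Kontsevich--Witten differentials; the prescribed series $e^{-\sum_k s_kt^k}=\sum_k(2k+1)!!\,t^k$ is precisely what matches the odd-power expansion of $y$ against $e^{\sum_k s_ka^{2k}\kappa_k}$, with $\xi_k=(2k-1)!!/z^{2k+1}$ the usual basis attached to the Airy curve. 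For item~(4) I would instead start from presentation~(3): the operators $\frac1{\cS(\alpha\hbar\partial)}$ attached to the two LogTR-vital singularities translate, under the $\cS$-operator/Hodge-class dictionary, into the Hodge factors $\Lambda(-2a^2)\Lambda(-2a^2)$, while the remaining factor $\Lambda(a^2)$ together with $e^{\sum_k\frac{(-1)^ka^{2k}}k\kappa_k}$ comes from the non-logarithmic part of $\hat x$, and the passage from $\xi_k=(2k-1)!!/z^{2k+1}$ to $\xi_k=(-d/dx)^k\frac1z$ accounts for the two geometric presentations. Since $(1)\Leftrightarrow(3)$ is already established, proving $(1)\Leftrightarrow(2)$ and $(3)\Leftrightarrow(4)$ makes the nontrivial-looking identity between the two moduli-space formulas come for free.

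I expect the genuinely substantial step to be the Hodge-integral identification $(3)\Leftrightarrow(4)$ (equivalently the $\kappa$-only versus triple-$\lambda$ comparison $(2)\Leftrightarrow(4)$). The reductions among~(1), (3) and~(5) are formal once the kernel identity and the parameter dictionaries are in place, and $(1)\Leftrightarrow(2)$ is the classical shifted Kontsevich--Witten statement. By contrast, matching the precise combination $\Lambda(-2a^2)\Lambda(-2a^2)\Lambda(a^2)\,e^{\sum_k\frac{(-1)^ka^{2k}}k\kappa_k}$ requires carefully expanding the $\cS$-operators around each singular point and tracking how the shifts in $\cS(\alpha\hbar\partial)$ assemble into $\lambda$-classes; this bookkeeping, rather than any conceptual difficulty, is where the real work lies.
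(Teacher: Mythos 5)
Your proposal is correct in structure and follows essentially the same route as the paper, which in fact offers no detailed proof at all: it states that ``the equivalence of all these relations is known'' and merely records the two key observations, namely that $y\,dx$ (more precisely the anti-invariant combination $(y-y\circ\sigma)\,dx$) and the ramification data coincide for the curves of items~(1) and~(3) --- exactly your unifying kernel identity --- and that the triple Hodge parameters are the inverse residues of $dx$ at its three poles. Your verifications for items~(3) and~(5) (the residues $-\tfrac{1}{2a^2}$ at $z=\pm\tfrac1a$, the evenness of $\cS$, the affine substitution $\zeta=\tfrac{1-az}{2a^3}$ giving back $\tfrac{2z^2}{1-a^2z^2}\,dz$) all check out, and deferring $(1)\Leftrightarrow(2)$ and $(3)\Leftrightarrow(4)$ to the known intersection-number and Hodge-integral representations is precisely what the paper does. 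One imprecision in your sketch of item~(4): the third factor $\Lambda(a^2)$ does not come from a ``non-logarithmic part of $\hat x$'' (the function $x=-\tfrac{1}{2a^2}\log(1-a^2z^2)$ is entirely logarithmic, with a third, non-LogTR-vital, simple pole of $dx$ at $z=\infty$ of residue $\tfrac{1}{a^2}$ supplying that parameter), and the $\kappa$-classes arise not from $\hat x$ but from the discrepancy between the choice $y=z$ and the choice $y=\tfrac{1}{2a}\log\tfrac{1+az}{1-az}$ for which $\tfrac{dy}{dx}=\tfrac1z$ and the pure triple Hodge integral would appear; this would need correcting if you carried out the bookkeeping you rightly identify as the substantial step.
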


The equivalence of all these relations is known, we just collected them together in a compact form. The parameters $(-2a^2,-2a^2,a^2)$ of triple Hodge integrals are inverse to the residues $\bigl(\frac{-1}{2a^2},\frac{-1}{2a^2},\frac{1}{a^2}\bigr)$ of the differential $dx$ of item~(3) at its three poles. The triple Hodge integrals without insertion of $\kappa$ classes correspond to the choice of $y$ such that $\frac{dy}{dx}=\frac{1}{z}$, that is, $y=\frac{1}{2a}\log\frac{1+a z}{1-a z}$. Since our choice $y=z$ is different, an extra insertion of $\kappa$ classes is needed.

Note also that the computation of item~(5) shows that the coefficients of the power expansion of~$\omega^{(g)}_n$ at the point $z_i=1/a$ in the local coordinates $\zeta_i$ are monotone Hurwitz numbers.

\subsection{Atlantes vs spin Hurwitz numbers}

Consider the differentials of the Orlov--Scherbin family with
\begin{equation}
\hat y=y=z.
\end{equation}
If $e^{\psi(\ta)}$ is rational (and such that all its zeros and poles distinct from $0$ and $\infty$ are simple) then~\eqref{eq:OS} with
\begin{equation}\label{eq:tpsiACEH}
X=\frac{z}{e^{\psi(y(z))}},\qquad \tilde\psi(\ta)=\frac{1}{\cS(\hbar\partial_\ta)}\psi(\ta)
\end{equation}
produces TR differentials for the spectral curve
\begin{equation}\label{eq:spincurve}
x=\log X=\log z-\psi(z),\quad y(z)=z,
\end{equation}
see~\cite{ACEH,BDKS2}. By Proposition~\ref{prop:xytriv-OS}, these differentials can also be given by~\eqref{eq:y=z} with
\begin{equation} \label{eq:OS-atlantes}
\hat x=\frac{1}{\cS(\hbar\partial_z)}(\log z-\psi(z)).
\end{equation}

Assume now that $\psi(\ta)$ is itself rational, for instance, $\psi(\ta)=\ta^r$. Then the differentials produced by the same formula are \emph{not} necessarily TR differentials any more: they may attain extra poles at $z=\infty$. The numbers appearing in the power expansion of these differentials at the origin in the local coordinates~$X_i$ are known as \emph{Atlantes Hurwitz numbers}~\cite{ALS}. In order to correct the situation and to obtain the true TR differentials for the same spectral curve, one should apply~\eqref{eq:OS} to a different $\hbar$-deformation of $\psi$, namely, to $\tilde\psi(\ta)=\psi(\ta)$ (we still assume that $\psi(\ta)$ is rational). Equivalently, we can apply~\eqref{eq:y=z} with
\begin{equation}\label{eq:xhatspin}
\hat x=\frac{1}{\cS(\hbar\partial_z)}\log z-\psi(z).
\end{equation}
The corresponding differentials enumerate \emph{spin Hurwitz numbers}.

\subsubsection{Transalgebraic topological recursion}

In order to reproduce the Atlantes Hurwitz numbers in the case when $\psi(\ta)=\ta^r$, it was suggested in~\cite{BKW} to consider the following family of spectral curves
\begin{align} \label{eq:BKW-setup}
	X_N(z) & = z \Big(1+ \frac{\psi(z)}{N}\Big)^{-N}; && Y_N(z) = \frac{z}{X_N(z)}; && (\omega_N)^{(0)}_1 = Y_NdX_N,
\end{align}
depending on a parameter $N$; compute the differentials $(\omega_N)^{(g)}_n$ and take the limit $N\to \infty$ (it is a special case of the proposal made in~\cite{BKW} under the name transalgebraic topological recursion). The proof that the limit procedure works in this case (and gives the desired differentials) is given in~\cite{BKW}, and it is a quite subtle analytic argument. The methods of the present paper give an alternative proof of their main theorem. 

\subsubsection{Atlantes Hurwitz numbers}  In order to apply our technique to the limit procedure suggested in~\cite{BKW}, we have to pass from $X_N$ and $Y_N$ to $x_N = \log X_N$ and $y_N=z$. If we do it naively, we lose some of the critical points of $dX_N$ (the situation is similar to the one for Bousquet-M\'elou--Schaeffer differentials discussed in~\cite{BDS,ABDKS4}). To remedy this problem, we assume that $N$ is a positive integer and introduce an $\epsilon$-deformation of the family of differentials constructed in~\cite{BKW} by
	\begin{align} \label{eq:Nepsilonspectralcurve}
		\psi_{N,\epsilon}(z)&=\sum\nolimits_{i=1}^N\,\log P^N_{i,\epsilon}(z),
		\\ 
		x_{N,\epsilon}(z)&=\log z-\psi_{N,\epsilon}(z)=\log z-\sum\nolimits_{i=1}^N\,\log P^N_{\epsilon,i}(z),
		\\y_{N,\epsilon}(z)&=z,
	\end{align}
where $P^N_{1,\epsilon}\dots,P^N_{N,\epsilon}$ are small independent polynomial deformations of the polynomial $1+N^{-1}\psi(z)$ depending on a small parameter~$\epsilon$, such that all of these polynomials have the same degree as $\psi(z)$, and for $\epsilon\neq 0$ all zeros of $P^N_{1,\epsilon}\dots,P^N_{N,\epsilon}$ are simple and distinct (between all of these polynomials), and are close to the zeros of $1+N^{-1}\psi(z)$. The differential $dx_{N,\epsilon}$ is  meromorphic and the topological recursion is well defined. We obviously have
\begin{equation}
	\lim\limits_{N\to\infty}\lim\limits_{\epsilon\to 0}\psi_{N,\epsilon}(\ta)=\psi(\ta),\qquad
	\lim\limits_{N\to\infty}\lim\limits_{\epsilon\to 0}dx_{N,\epsilon}(z)=dx(z).
\end{equation}
According to~\cite[Proposition 3.1]{ABDKS4}, the TR differentials for this family of spectral curves are given by~\eqref{eq:y=z} with the function $\hat x_{N,\epsilon}$ taken in the form
\begin{equation}
	\hat x_{N,\epsilon}(z)=\frac{1}{\cS(\hbar\partial_z)}x_{N,\epsilon}(z).
\end{equation}
The formula is manifestly compatible with the limits $\epsilon\to 0$ and $N\to\infty$, where the first limit gives the $(\omega_N)^{(g)}_n$ differentials of~\cite{BKW} (it is a case of the standard compatibility of topological recursion with the limits discussed in detail in~\cite{limits}), and the subsequent $N\to\infty$ gives the formulas for the differentials computed with $\hat x$ given by \eqref{eq:OS-atlantes}, that is, the Orlov--Scherbin differentials enumerating Atlantes Hurwitz numbers (and not satisfying the topological recursion). 

\subsubsection{\texorpdfstring{$r$}{r}-spin Hurwitz numbers} Let us also illustrate how one can obtain the differentials enumerating $r$-spin Hurwitz numbers via a similar limit procedure. In this case the computation below is not reproducing the transalgebraic topological recursion of~\cite{BKW} as we lose the contribution of some of the critical points that they have for finite $N$. 

Let $\psi(\theta) = \theta^r$ and consider
\begin{align}
\psi_N(\ta)&=N\log(1+N^{-1}\psi(\ta)),
\\ x_N(z)&=\log z-\psi_N(z),
\\ y_N(z) & =z.
\end{align}
Then, $dx_N=\frac{dz}{z}-\frac{\psi'(z)\,dz}{1+N^{-1}\psi(z)}$ is rational and we have obviously
\begin{equation}
\lim\limits_{N\to\infty}\psi_N(\ta)=\psi(\ta),\qquad
\lim\limits_{N\to\infty}dx_N(z)=dx(z).
\end{equation}
It is important to stress that the set of zeros of $dx_N$ does differ from the set of zeros of $dX_N$ given in~\eqref{eq:BKW-setup}, so the TR differentials for this family differ from the ones considered in the transalgebraic TR approach. 

According to~\cite[Proposition 3.1]{ABDKS4}, the TR differentials for this family of spectral curves are given again by~\eqref{eq:y=z} with the function $\hat x_N$ taken in the form
\begin{equation}
\hat x_N(z)=\frac{1}{\cS(\hbar\partial_z)}\log z-\frac{1}{\cS(N^{-1}\hbar\partial_z)}\psi_N(z).
\end{equation}
The differentials computed by this formula have a limit as $N\to\infty$, and the limiting differentials are given also by~\eqref{eq:y=z} with the function $\hat x$ given by~\eqref{eq:xhatspin}, that is, these are differentials for $r$-spin Hurwitz numbers.

\printbibliography

\end{document}